\newtheorem{theorem}{Theorem}[section]
\newtheorem{lemma}[theorem]{Lemma}
\newtheorem{corollary}[theorem]{Corollary}
\theoremstyle{definition}
\newtheorem{definition}[theorem]{Definition}
\theoremstyle{remark}
\def \cA {\mathcal{A}}
\def \cG {\mathcal{G}}
\def \cR {\mathcal{R}}
\def \cS {\mathcal{S}}
\def \e {\varepsilon}
\def \k {\kappa}
\def \l {\lambda}
\def \s {\sigma}
\def \N {\mathbb{N}}
\def \Z {\mathbb{Z}}
\def \lra {\longrightarrow}
\def \al {\mathcal{A}^\ell}
\def \ul {\lbrace\,1,\dots,\ell\,\rbrace}
\def \zl {\lbrace\,0,\dots,\ell\,\rbrace}
\def \exa {e^{-a}}
\newcommand{\updown}[2]{\genfrac{\lbrace}{\rbrace}{0pt}{}{#1}{#2}}
\begin{document}

\begin{center}
\begin{LARGE}
Quasispecies on class--dependent\\ fitness landscapes
\end{LARGE}

\begin{large}
Raphaël Cerf and Joseba Dalmau

\vspace{-12pt}
École Normale Supérieure

\vspace{4pt}
\today
\end{large}

\end{center}

\begin{abstract}
\noindent
We study Eigen's quasispecies model
in the asymptotic regime 
where the length of the genotypes goes to $\infty$
and the mutation probability goes to 0.
We give several explicit formulas 
for the stationary solutions
of the limiting system of differential equations.
\end{abstract}

\section{Introduction}
Manfred Eigen introduced the quasispecies model
in his 1971 celebrated article
about the first stages of life on Earth~\cite{Eig}.
As a part of his article,
Eigen constructed a model 
in order to explain 
the evolution 
of a population of macromolecules 
subject to selection and mutation forces.
Given a set of genotypes $\cG$,
a fitness function $A:\cG\lra[0,\infty[$
and a mutation kernel $M:\cG\times\cG\lra[0,1]$,
Eigen's model states that
the concentration $x_v$ of the genotype $v\in\cG$
evolves according to the differential equation
$$x_v'(t)\,=\,
\sum_{u\in\cG}x_u(t)A(u)M(u,v)
-x_v(t)\sum_{u\in\cG}x_u(t)A(u)\,.$$
The first term
accounts for the production
of individuals having genotype $v$,
production due to erroneous replication of other genotypes
as well as faithful replication of itself.
The negative term accounts 
for the loss of individuals having genotype $v$,
and keeps the total concentration of individuals constant.
Instead of studying the model
in all its generality,
Eigen considered the following simplified setting:

\textbf{Genotypes.} 
They are sequences of fixed length $\ell\geq1$
over a finite alphabet $\cA$ of cardinality $\k$.
The set of genotypes is then $\al$.

\textbf{Selection.}
It is given by the \emph{sharp peak landscape},
i.e., there is a genotype $w^*\in\al$,
called the master sequence,
having fitness $\s>1$,
while all the other genotypes have fitness 1.
The fitness function $A:\al\lra[0,\infty[$
is thus given by
$$\forall u\in\al\qquad
A(u)\,=\,\begin{cases}
\quad\s &\quad\text{if}\ u=w^*\,,\\
\quad1 &\quad\text{if}\ u\neq w^*\,.
\end{cases}$$
\textbf{Mutations.} 
They happen during reproduction,
independently at random over each site of the sequence,
with probability $q\in[0,1]$.
When a mutation happens,
the letter is replaced 
by another one, chosen uniformly at random 
over the $\k-1$ other letters of the alphabet.
The mutation kernel is thus given by
$$\forall u,v\in\al\qquad
M(u,v)\,=\,\Big(\frac{q}{\k-1}\Big)^{d(u,v)}
(1-q)^{\ell-d(u,v)}\,,$$
where $d$ is the Hamming distance, i.e.,
the number of different digits between two sequences:
$$\forall u,v\in\al\qquad 
d(u,v)\,=\,\text{card}\big\lbrace\,
l\in\ul:u(l)\neq v(l)
\,\big\rbrace\,.$$
Eigen drew two main conclusions
from the study of this simplified model:
there is an error threshold phenomenon
for the mutation probability
and a so--called quasispecies regime 
for subcritical mutation probabilities.
Indeed,
when the length of the sequences goes to $\infty$,
an error threshold phenomenon arises:
there exists a critical mutation probability,
separating two totally different regimes.
For supercritical mutation probabilities
the population at equilibrium is totally random,
whereas for subcritical mutation probabilities
the population at equilibrium is distributed as a quasispecies,
i.e., there is a positive fraction of the master sequence
present in the population
along with a cloud of mutants
that closely resemble the master sequence.

After Eigen's proposal of the quasispecies model,
many other authors have investigated it,
both in the simple setting we have just presented
and in more general settings.
Eigen, McCaskill and Schuster~\cite{EMS} studied the model 
in great detail.
As pointed out by them, 
one of the main challenges related to Eigen's model
is to find the distribution of the quasispecies:
the concentration of the master sequence
and the concentrations of the different mutants
in the population at equilibrium.
It is generally impossible to give explicit formulas
for these concentrations.
Jones, Enns and Rangnekar~\cite{JER}
and Thompson and McBride~\cite{TM}
give an exact solution of the quasispecies
by linearising Eigen's system of differential equations.
In the same spirit,
Swetina and Schuster~\cite{SS}
use this linearisation to
characterise the stationary distribution
of the quasispecies
as the eigenvector corresponding to the highest eigenvalue
of the linearised system matrix.
Saakian and Hu~\cite{SH} derive exact solutions for the 
quasispecies model by assuming a certain ansatz;
Saakian~\cite{Saa} and 
Saakian, Biebricher and Hu~\cite{SBH}
derive the distribution for several different 
fitness landscapes, 
in particular for smooth landscapes.
Novozhilov and Semenov~\cite{NS1,NS2}
and Bratus, Novozhilov and Semenov~\cite{BNS1,BNS2}
obtain more concrete results for the quasispecies
distribution
for several special cases of the mutation kernel
and the fitness function.

The aim of this article is to present a scheme
in order to obtain explicit formulas.
The key ingredients to this scheme are twofold:
we break the space of genotypes into Hamming classes
and we study the asymptotic regime where
the length of the chains $\ell$ goes to $\infty$,
the mutation probability goes to $0$
and $\ell q$ goes to $a\in\,]0,\infty[\,$.
The idea comes from the articles~\cite{CerfM,CD},
where the authors consider a Moran model
in order to recover the error threshold phenomenon
as well as the quasispecies 
for a finite--population stochastic model.
The Moran model is studied
in the setting we have just introduced:
genotypes given by $\al$,
sharp peak landscape
and independent mutations per locus.
Eigen's model is recovered in the infinite population limit~\cite{Dalmau},
the error threshold phenomenon is also recovered,
and an explicit formula is obtained
for the distribution of the quasispecies.
We illustrate now how the two ingredients mentioned above
make possible to obtain such a formula,
by applying our scheme directly to Eigen's model.

\textbf{Hamming classes.}
The genotype space $\al$
is broken into Hamming classes 
with respect to the master sequence.
To this end we define the mapping 
$H:\al\lra\zl$ by setting
$$\forall u\in\al\qquad
H(u)\,=\,d(u,w^*)\,.$$
The mapping $H$
induces a fitness function
$A_H:\zl\lra[0,\infty[$ 
on the Hamming classes,
which is given by:
$$\forall l\in\zl\qquad
A_H(l)\,=\,\begin{cases}
\quad\s & \quad\text{if}\ l=0\,,\\
\quad 1 & \quad\text{if}\ 1\leq l\leq\ell\,.
\end{cases}$$
Likewise,
the mapping $H$
induces a mutation kernel 
$M_H$ over the Hamming classes: 
for all $b,c\in\zl$,
$$M_H(b,c)\,=\,\sum_{
\genfrac{}{}{0pt}{1}{0\leq k\leq\ell-b}{
\genfrac{}{}{0pt}{1}
 {0\leq l\leq b}{k-l=c-b}
}
}
{ \binom{\ell-b}{k}}
{\binom{b}{l}}
q^k
(1-q)^{\ell-b-k}
\Big(\frac{q}{\kappa-1}\Big)^l
\Big(1-\frac{q}{\kappa-1}\Big)^{b-l}\,.$$
This formula has been given first in~\cite{SS}
and later in a slightly different form in~\cite{NS}.
For $k\in\zl$, let us denote by $x_k$
the concentration of individuals 
in the Hamming class $k$.
According to Eigen's model,
the evolution of the concentrations
is driven by the following system of differential equations:
$$x_k'(t)\,=\,
\sum_{j=0}^\ell x_j(t)A_H(j)M_H(j,k)
-x_k(t)\sum_{j=0}^\ell x_j(t)A_H(j)\,,\qquad
0\leq k\leq \ell\,.$$
\textbf{Asymptotic regime.}
We make the length of the chains go to $\infty$
and the mutation probability go to $0$
in the following way:
$$\ell\lra\infty\,,\qquad
q\lra0\,,\qquad
\ell q\lra a\in\,]0,\infty[\,\,.$$
In this asymptotic regime
we obtain a limiting mutation kernel
$M_\infty$ given by: for all $j,k\geq 0$,
$$M_\infty(j,k)\,=\,\begin{cases}
\quad\displaystyle \exa\frac{a^{k-j}}{(k-j)!} &\quad\text{if}\ j\leq k\,,\\
\quad0 & \quad\text{if}\ j>k\,.
\end{cases}$$
We can now write the limiting system of differential equations:
$$x_k'(t)\,=\,
\sum_{j=0}^k x_j(t)A_H(j)\exa\frac{a^{k-j}}{(k-j)!}
-x_k(t)\sum_{j=0}^\infty x_j(t)A_H(j)\,,\qquad
k\geq0\,.$$
The distribution of the quasispecies 
is the only positive stationary solution of the above system,
which exists for values of $a$
such that $\s\exa>1$,
and is given by the formula
$$\rho_k\,=\,(\s\exa-1)\frac{a^k}{k!}
\sum_{j=1}^\infty \frac{j^k}{\s^j}\,,\qquad
k\geq 0\,.$$
Our objective is to generalise this formula
to fitness functions $f:\N\lra[0,\infty[$
others than the sharp peak landscape fitness function.

\section{Results}
Let $f:\N\lra[0,\infty[$.
We consider the system of differential equations
$$x_k'(t)\,=\,
\sum_{j=0}^k x_j(t)f(j)\exa\frac{a^{k-j}}{(k-j)!}
-x_k(t)\sum_{j=0}^\infty x_j(t)f(j)\,,\qquad
k\geq0\,,$$
and we look for the stationary solutions of the system,
i.e., we want to solve the system of equations 
$$(\cS)\qquad
0\,=\,\sum_{j=0}^k x_jf(j)\exa\frac{a^{k-j}}{(k-j)!}
-x_k\Phi\,,\quad k\geq 0\,,$$
where $\Phi=\sum_{j\geq0}x_jf(j)$.
Since we think of $x_k$
as the concentration of the Hamming class $k$
in a population, we are only interested 
in non--negative solutions of the system $(\cS)$.
We say that $(x_k)_{k\geq0}$
is a quasispecies associated to $f$
if it is a non--negative solution of $(\cS)$
such that $x_0>0$ and $\sum_{k\geq0}x_k=1$.

\textbf{Assumption.}
We suppose that the 
fitness of the Hamming class $0$
is higher than the fitness of all the other classes,
i.e., the fitness function ${f:\N\lra[0,\infty[}$
satisfies $f(0)>f(k)$, $k\geq1$.

Note that the hypothesis is coherent
with the Hamming class $0$ corresponding
to the master sequence (the fittest genotype).
From now on, every fitness function is assumed
to verify this hypothesis.
We fix one such fitness function $f$ and
we focus ourselves on finding
the quasispecies distributions associated to $f$.

Let us remark that under this assumption,
if $(x_k)_{k\geq0}$ is a quasispecies,
then the concentration $x_k$ of the Hamming class $k$
is strictly positive. Indeed,
since we assume that $x_0>0$,
$$x_k\,=\,\frac{\displaystyle
\sum_{0\leq j\leq k}x_jf(j)\exa\frac{a^{k-j}}{(k-j)!}}
{\displaystyle\sum_{j\geq0}x_jf(j)}\,\geq\,
\frac{\displaystyle x_0f(0)\exa\frac{a^k}{k!}}{f(0)}\,>\,0\,.$$

The first of our results expresses the fitnesses 
as a function of the concentrations of the quasispecies.
\begin{theorem}\label{fitness}
Let us suppose that $(x_k)_{k\geq0}$ is a quasispecies
associated to $f$.
Then,
$$\forall k\geq 1\qquad
f(k)\,=\,\frac{f(0)}{x_k}
\sum_{j=0}^k (-1)^j\frac{a^j}{j!}x_{k-j}\,.$$
\end{theorem}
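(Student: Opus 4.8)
The plan is to read the stationary system $(\cS)$ as a discrete convolution and then invert it. Rewriting $(\cS)$ in the form
$$x_k\Phi\,=\,\exa\sum_{j=0}^k x_jf(j)\frac{a^{k-j}}{(k-j)!}\,,\qquad k\geq0\,,$$
I observe that, up to the factor $\exa$, the right-hand side is the convolution of the sequence $\big(x_jf(j)\big)_{j\geq0}$ with the sequence $g=(g_m)_{m\geq0}$ given by $g_m=a^m/m!$. The goal is therefore to solve this convolution for the unknown sequence $\big(x_kf(k)\big)_{k\geq0}$ and then divide through by $x_k$; this division is legitimate because the remark preceding the theorem guarantees $x_k>0$ for every $k$.

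First I would determine the constant $\Phi$. Taking $k=0$ in the displayed identity leaves only the $j=0$ term, giving $x_0\Phi=\exa\,x_0f(0)$; since $x_0>0$ this forces $\Phi=\exa f(0)$, equivalently $e^a\Phi=f(0)$. Next I would invert the convolution. The convolution inverse of $g$ is the sequence $h=(h_m)_{m\geq0}$ with $h_m=(-a)^m/m!$ (this is the discrete analogue of $e^{az}e^{-az}=1$): indeed, by the binomial theorem,
$$\sum_{i=0}^n g_i h_{n-i}\,=\,\frac{1}{n!}\sum_{i=0}^n\binom{n}{i}a^i(-a)^{n-i}\,=\,\frac{(a-a)^n}{n!}\,=\,\begin{cases}\ 1 & \text{if }n=0\,,\\ \ 0 & \text{if }n\geq1\,.\end{cases}$$

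Convolving both sides of $e^a x_k\Phi=\sum_{j=0}^k x_jf(j)\,g_{k-j}$ with $h$, interchanging the two finite summations, and using the relation $\sum_{m=j}^k g_{m-j}h_{k-m}=\delta_{jk}$ to collapse the inner sum, exactly the $j=k$ term survives and I obtain
$$x_kf(k)\,=\,e^a\Phi\sum_{j=0}^k x_{k-j}\frac{(-a)^j}{j!}\,=\,f(0)\sum_{j=0}^k(-1)^j\frac{a^j}{j!}x_{k-j}\,.$$
Dividing by $x_k>0$ yields the claimed formula for $k\geq1$. The computation is essentially forced once the convolution structure is recognised; the only point requiring care is the index bookkeeping when applying the inverse kernel $h$, that is, verifying that the double sum collapses to the single surviving term via the binomial identity above. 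That step is elementary, and I do not anticipate any genuine obstacle beyond keeping the indices straight.
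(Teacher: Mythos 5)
Your proof is correct, but it is organised quite differently from the paper's. The paper proceeds by induction on $k$: it rearranges the recurrence to express $f(k)$ in terms of $x_k$ and the earlier terms, substitutes the already-established formulas for $f(1),\dots,f(k-1)$, and computes the coefficient of each $x_{k-i}$, which collapses to $(-1)^i a^i/i!$ via the alternating binomial identity $\sum_{0\leq h<i}(-1)^h\binom{i}{h}=-(-1)^i$. You instead read the stationary system globally as the convolution of $(x_jf(j))_{j\geq0}$ with the kernel $g_m=a^m/m!$ and invert it in one step with the kernel $h_m=(-a)^m/m!$; the same binomial cancellation appears, but it is packaged once and for all in the verification that $g*h=\delta_0$, which is just the coefficient identity behind $e^{az}e^{-az}=1$. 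Your route avoids induction entirely, makes the mechanism of the formula transparent (it is a formal power-series inversion), and yields the identity $x_kf(k)=f(0)\sum_{j=0}^k(-1)^j\frac{a^j}{j!}x_{k-j}$ uniformly for all $k\geq0$ before the harmless division by $x_k>0$, which you correctly justify by the positivity remark preceding the theorem. The paper's induction buys nothing extra here; if anything your argument is the cleaner of the two, at the modest cost of having to state and check the inverse-kernel identity explicitly.
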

The interest of this result lies in its potential applications.
When performing practical experiments,
the concentrations of the different genotypes can be measured,
and one delicate question is to infer
the underlying fitness landscape.
Recent progresses allow even to sequence
in--vivo virus populations,
and the quasispecies model is one of the main tools
employed in order to infer the fitness landscape
from the experimental data~\cite{SGMGB}.  

We look now for an inverse formula,
in other words,
we want to express the concentrations
of the different Hamming classes
as a function of the fitnesses.
let $(x_k)_{k\geq0}$ be a quasispecies 
associated to $f$.
The equation for $k=0$ in the system $(\cS)$ is
$$0\,=\,x_0\big(
f(0)\exa-\Phi
\big)\,.$$
Since we suppose that $x_0>0$, we have $\Phi=f(0)\exa$.
Replacing $\Phi$ by $f(0)\exa$ in $(\cS)$
we obtain a recurrence relation for $(x_k)_{k\geq0}$.
To begin with,
we will try to solve the recurrence relation with 
initial condition equal to $1$, i.e.,
$$(\cR)\qquad\qquad
\begin{array}{l}
y_0\,=\,1\,,\\
\displaystyle y_k\,=\,
\frac{1}{f(0)-f(k)}\sum_{j=0}^{k-1}y_jf(j)\frac{a^{k-j}}{(k-j)!}\,,
\qquad k\geq1\,.
\end{array}
$$
\begin{lemma}\label{conv}
Let $(y_k)_{k\geq0}$
be the solution of the recurrence relation $(\cR)$.

\parskip 0 pt
$\bullet$
If the series associated to $(y_k)_{k\geq0}$ converges,
there exists a unique quasispecies
$(x_k)_{k\geq0}$ associated to $f$, which is given by:
$$x_0\,=\,\bigg(\sum_{k\geq0}y_k\bigg)^{-1}\,,\qquad
x_k\,=\,x_0y_k\,,\quad k\geq1\,.$$
$\bullet$ 
If the series associated to $(y_k)_{k\geq0}$ diverges,
no quasispecies associated to $f$ exists.
\end{lemma}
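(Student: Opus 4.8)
The plan is to show that every quasispecies is forced to be a multiple of the sequence $(y_k)_{k\ge0}$, and conversely that this multiple really is a quasispecies exactly when the associated series converges. First I would reuse the observation made just before the statement: for any quasispecies $(x_k)_{k\ge0}$, the equation of $(\cS)$ for $k=0$ together with $x_0>0$ forces $\Phi=f(0)\exa$. Substituting this value into the equation of $(\cS)$ for a general $k\ge1$ and cancelling $\exa$ turns it into
$$x_k\big(f(0)-f(k)\big)\,=\,\sum_{j=0}^{k-1}x_jf(j)\frac{a^{k-j}}{(k-j)!}\,,$$
which, since $f(0)-f(k)>0$ by the standing assumption, is precisely the recurrence $(\cR)$ with $y$ replaced by $x$. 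Because this recurrence is linear and its coefficients do not depend on the starting value, a straightforward induction gives $x_k=x_0y_k$ for every $k\ge0$; the same induction, all denominators being positive and all other ingredients non-negative, shows $y_k\ge0$.

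Next I would impose the normalisation $\sum_kx_k=1$, which via $x_k=x_0y_k$ reads $x_0\sum_ky_k=1$. If the series $\sum_ky_k$ diverges, this is impossible for $x_0>0$, since the partial sums $x_0\sum_{k\le N}y_k$ would tend to $\infty$ while the partial sums of $(x_k)$ stay below $1$; this proves the second bullet. If $\sum_ky_k$ converges, then $x_0=\big(\sum_ky_k\big)^{-1}$ is forced, so there is at most one quasispecies and it must be the one displayed.

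The step I expect to be the real obstacle is to verify that this candidate is genuinely a quasispecies, that is, that it solves $(\cS)$ with the true value $\Phi=\sum_jx_jf(j)$ rather than merely with the substituted value $f(0)\exa$. Non-negativity, $x_0>0$ and $\sum_kx_k=1$ are immediate, so the only delicate point is the hidden consistency condition $\sum_jx_jf(j)=f(0)\exa$. To get it I would sum the identity $x_kf(0)=\sum_{j=0}^kx_jf(j)\,a^{k-j}/(k-j)!$, valid for all $k\ge0$, over $k$ and interchange the two summations, which is legitimate by Tonelli as every term is non-negative; the inner sum collapses to $\sum_{m\ge0}a^m/m!=e^a$, so the right-hand side becomes $e^a\sum_jx_jf(j)$ while the left-hand side is $f(0)\sum_kx_k=f(0)$. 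This yields $\sum_jx_jf(j)=f(0)\exa$ exactly, hence $\Phi$ has the required value and every equation of $(\cS)$ holds. The convergence of $\sum_kx_k$ to $1$ is what makes the interchange and the finiteness of $\Phi$ work, which explains why the dichotomy in the two bullets is exactly convergence versus divergence of the series associated to $(y_k)$.
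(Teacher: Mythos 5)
Your proof is correct and follows the same route as the paper's own (which simply declares the first bullet ``obviously true'' and sketches the second): the recurrence $(\cR)$ forces $x_k = x_0 y_k$, so divergence of $\sum_k y_k$ is incompatible with $\sum_k x_k = 1$ and $x_0>0$, while convergence pins down $x_0=\big(\sum_k y_k\big)^{-1}$. Your verification that the normalised candidate genuinely solves $(\cS)$ with the true $\Phi$ --- summing the identity $x_kf(0)=\sum_{j=0}^{k}x_jf(j)\,a^{k-j}/(k-j)!$ over $k$ and using Tonelli to obtain $\Phi=f(0)e^{-a}$ --- is precisely the detail the paper leaves implicit, and it is sound.
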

\begin{proof}
The first statement of the lemma is obviously true.
For the second one,
note that if $(x_k)_{k\geq0}$ 
is a quasispecies associated to $f$,
then the sequence
$(y_k)_{k\geq0}$ defined by
$y_k=x_k/x_0$, $k\geq0$,
satisfies the recurrence relation $(\cR)$,
and the series associated to $(y_k)_{k\geq0}$ converges.
\end{proof}
Next we give three different explicit formulas
for the sequence $(y_k)_{k\geq0}$.
The first of the formulas involves multinomial coefficients.
\begin{theorem}\label{multinom}
For all $k\geq1$,
$$y_k\,=\,\frac{a^k}{k!}\frac{f(0)}{f(0)-f(k)}\\
\Bigg(1+\!\!\!\!\!
\sum_{\genfrac{}{}{0pt}{1}
{1\leq h<k}
{1\leq i_1<\dots<i_h<k}}\!\!\!\!\!
\frac{k!}{i_1!(i_2-i_1)!\dots(k-i_h)!}
\prod_{t=1}^h\frac{f(i_t)}{f(0)-f(i_t)}
\Bigg)\,.$$
\end{theorem}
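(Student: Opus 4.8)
The plan is to unfold the recurrence $(\cR)$ completely and to organize the resulting expression as a sum over strictly increasing chains of indices. First I would abbreviate $c_m=a^m/m!$, so that $(\cR)$ reads $y_0=1$ together with $(f(0)-f(k))\,y_k=\sum_{j=0}^{k-1}y_jf(j)\,c_{k-j}$ for $k\geq1$. This is the only input, and since for each fixed $k$ every quantity in sight is a finite sum, no convergence issue enters: the statement is a purely algebraic identity to be established by strong induction on $k$.

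Before inducting I would recast the claimed identity in a more symmetric shape. For a chain $0=i_0<i_1<\dots<i_h<i_{h+1}=k$ with $h\geq0$ intermediate indices, the multinomial coefficient $\tfrac{k!}{i_1!(i_2-i_1)!\cdots(k-i_h)!}$ is exactly $k!\big/\prod_{s=0}^h(i_{s+1}-i_s)!$, whence, using $\sum_{s=0}^h(i_{s+1}-i_s)=k$,
$$\frac{a^k}{k!}\cdot\frac{k!}{\prod_{s=0}^h(i_{s+1}-i_s)!}=\prod_{s=0}^h\frac{a^{i_{s+1}-i_s}}{(i_{s+1}-i_s)!}=\prod_{s=0}^h c_{i_{s+1}-i_s}\,.$$
Consequently the theorem is equivalent to
$$y_k=\frac{f(0)}{f(0)-f(k)}\sum_{0=i_0<i_1<\dots<i_{h+1}=k}\ \prod_{t=1}^h\frac{f(i_t)}{f(0)-f(i_t)}\prod_{s=0}^h c_{i_{s+1}-i_s}\,,$$
the single chain with $h=0$ (namely $0<k$) producing the isolated ``$1$'' inside the parentheses of the stated formula.

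I would prove this equivalent form by strong induction on $k$. The base case $k=1$ is immediate: the only chain is $0<1$, giving $y_1=\tfrac{f(0)}{f(0)-f(1)}\,c_1$, which matches $(\cR)$. For the inductive step I would substitute the inductive hypothesis for $y_j$, $1\leq j\leq k-1$, into $(f(0)-f(k))\,y_k=\sum_{j=0}^{k-1}y_jf(j)\,c_{k-j}$. The term $j=0$ contributes $f(0)\,c_k$, which is the trivial chain $0<k$. For $j\geq1$ the factor $y_jf(j)\,c_{k-j}$ equals $f(0)\,\tfrac{f(j)}{f(0)-f(j)}\,c_{k-j}$ times the chain-sum attached to $y_j$; appending the final step $j\to k$ to each chain ending at $j$ yields precisely the chains ending at $k$ whose largest intermediate index equals $j$, with the new factor $\tfrac{f(j)}{f(0)-f(j)}$ inserted at position $j$ and the new gap $c_{k-j}$ appended. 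Summing over $j=1,\dots,k-1$ therefore enumerates all chains to $k$ carrying at least one intermediate index, and together with the $j=0$ term this reconstitutes the entire chain-sum, so that dividing by $f(0)-f(k)$ gives the desired formula for $y_k$.

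The single delicate point — and the only genuine obstacle — is the bookkeeping in this last re-indexing. One must verify that classifying the chains from $0$ to $k$ by their largest intermediate index $j$ sets up a bijection with pairs consisting of a chain from $0$ to $j$ and the terminal step $j\to k$, and that under this bijection the weight factorizes as asserted, the factor $f(j)/(f(0)-f(j))$ being furnished by the recurrence while $c_{k-j}$ records the final gap. Once this is checked, the first reduction is reversed to turn the chain-sum back into the multinomial expression of the theorem, completing the argument.
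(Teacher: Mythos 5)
Your proof is correct and is in substance the paper's own argument: the paper likewise unfolds the recurrence $(\cR)$ into a sum over strictly increasing chains of indices (after the cosmetic change of variables $z_j=y_j\big(f(0)-f(j)\big)/a^j$, $g(j)=f(j)/\big(f(0)-f(j)\big)$), and its step ``we iterate this formula'' is exactly your strong induction that classifies chains by their largest intermediate index and appends the terminal gap $a^{k-j}/(k-j)!$. Your version merely makes that iteration and its bookkeeping explicit, so no further comment is needed.
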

\textbf{Up--down coefficients.} 
The sequence $(y_k)_{k\geq0}$ 
can also be expressed in terms of up--down coefficients.
Let us first introduce the up--down numbers or coefficients~\cite{Shev}.
Let $n\geq2$ and let 
$(q_1,\dots,q_{n-1})\in\lbrace\,-1,1\,\rbrace^{n-1}$.
We say that a permutation
${\s=(\s(1),\dots,\s(n))}$ 
of $\lbrace\,1,\dots,n\,\rbrace$
has Niven's signature $(q_1,\dots,q_{n-1})$
if for every ${i\in\lbrace\,1,\dots,n-1\,\rbrace}$,
the product $q_i(\s(i+1)-\s(i))$ is positive~\cite{Niven}.
\begin{definition}
Let $n\geq2$, $0\leq h<n$
and $0=i_0<i_1<\cdots<i_h<n$.
The up--down coefficient
$$\updown{n}{i_0,\dots,i_h}$$
is defined as the number of permutations of
$\lbrace\,1,\dots,n\,\rbrace$
having Niven's signature $(q_1,\dots,q_{n-1})$ given by
$$\forall i\in\lbrace\,1,\dots,n-1\,\rbrace\qquad
q_i\,=\,\begin{cases}
\quad 1 &\quad\text{if}\ i\in\lbrace\,i_1,\dots,i_h\,\rbrace\,,\\
\quad -1 &\quad\text{otherwise}\,.
\end{cases}$$
\end{definition}
\begin{theorem}\label{updown}
For all $k\geq1$
$$
y_k\,=\,\frac{a^k}{k!}
\Bigg(\prod_{j=1}^k\frac{f(0)}{f(0)-f(j)}\Bigg)
\!\!\!\sum_{\genfrac{}{}{0pt}{1}{0\leq h<k}
{0=i_0<\dots<i_h<k}}\!\!\!\!\!\!\!
\updown{k}{i_0,\dots,i_h}\prod_{t=0}^h\frac{f(i_t)}{f(0)}\,.
$$
\end{theorem}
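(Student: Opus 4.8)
The plan is to derive Theorem~\ref{updown} from Theorem~\ref{multinom} by checking that the two closed forms for $y_k$ agree. Set $\alpha_j=f(j)/f(0)$, so that $0\le\alpha_j<1$ by the standing hypothesis, $f(i_t)/f(0)=\alpha_{i_t}$, $f(i_t)/(f(0)-f(i_t))=\alpha_{i_t}/(1-\alpha_{i_t})$ and $f(0)/(f(0)-f(j))=1/(1-\alpha_j)$. Equating the two expressions for $y_k$ and dividing both by the common factor $\frac{a^k}{k!}\prod_{j=1}^{k}\frac{f(0)}{f(0)-f(j)}$, the quotient on the Theorem~\ref{multinom} side is $\prod_{j=1}^{k-1}(1-\alpha_j)$ times the bracketed sum, and for each term each factor $1/(1-\alpha_{i_t})$ cancels one $(1-\alpha_{i_t})$. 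Writing the bracket as a sum over subsets $S=\{i_1<\dots<i_h\}$ of $\{1,\dots,k-1\}$ (the term $1$ being $S=\emptyset$), the identity to be proved becomes
\begin{equation}
\sum_{S\subseteq\{1,\dots,k-1\}}N(S)\prod_{j\in S}\alpha_j\!\!\prod_{j\in\{1,\dots,k-1\}\setminus S}\!\!(1-\alpha_j)
\;=\;
\sum_{\substack{0\le h<k\\ 0=i_0<\dots<i_h<k}}\updown{k}{i_0,\dots,i_h}\prod_{t=1}^{h}\alpha_{i_t},
\tag{$\star$}
\end{equation}
where $N(S)=\binom{k}{i_1,i_2-i_1,\dots,k-i_h}$ is the multinomial coefficient of Theorem~\ref{multinom}. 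This is a polynomial identity in $\alpha_1,\dots,\alpha_{k-1}$.

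First I would rewrite the left-hand side of $(\star)$ as a descent enumerator. The classical interpretation of the multinomial coefficient is that $N(S)$ counts the permutations $\sigma\in\mathfrak{S}_k$ whose descent set $\mathrm{Des}(\sigma)=\{\,i:\sigma(i)>\sigma(i+1)\,\}$ is contained in $S$, namely the permutations that increase on each block cut out by $S$. Substituting this, exchanging the two summations, and summing over all $S\supseteq\mathrm{Des}(\sigma)$ for fixed $\sigma$, every index $j\notin\mathrm{Des}(\sigma)$ contributes $\alpha_j+(1-\alpha_j)=1$ while each $j\in\mathrm{Des}(\sigma)$ contributes $\alpha_j$; the left-hand side of $(\star)$ thus collapses to $\sum_{\sigma\in\mathfrak{S}_k}\prod_{j\in\mathrm{Des}(\sigma)}\alpha_j$.

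Next I would treat the right-hand side. By the very definition of Niven's signature, $\updown{k}{i_0,\dots,i_h}$ is the number of permutations of $\{1,\dots,k\}$ whose set of ascents $\mathrm{Asc}(\sigma)=\{\,i:\sigma(i)<\sigma(i+1)\,\}$ equals $\{i_1,\dots,i_h\}$ (the positions carrying $q_i=+1$). As the tuple $(i_1,\dots,i_h)$ ranges over all subsets of $\{1,\dots,k-1\}$ this partitions $\mathfrak{S}_k$, so the right-hand side of $(\star)$ equals $\sum_{\sigma\in\mathfrak{S}_k}\prod_{j\in\mathrm{Asc}(\sigma)}\alpha_j$. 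It then remains to match the descent and ascent enumerators, and this is supplied by the involution $\sigma\mapsto\sigma^{*}$ of $\mathfrak{S}_k$ given by $\sigma^{*}(i)=k+1-\sigma(i)$: since $\sigma^{*}(i+1)-\sigma^{*}(i)=-(\sigma(i+1)-\sigma(i))$, it interchanges ascents and descents, whence $\mathrm{Des}(\sigma^{*})=\mathrm{Asc}(\sigma)$ and the two sums of monomials over $\mathfrak{S}_k$ coincide. This establishes $(\star)$ and hence the theorem (the degenerate case $k=1$, where both sides equal $1$, being immediate).

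I expect the crux to be the passage to permutation statistics: recognizing $N(S)$ as the count of permutations with descent set inside $S$ and then performing the telescoping sum over supersets, which is exactly what turns the awkward $\prod(1-\alpha_j)$ factors into the clean descent enumerator. Once both sides of $(\star)$ are read as descent and ascent generating polynomials over $\mathfrak{S}_k$, the ascent--descent complementation finishes everything. The points demanding care are the bookkeeping of the fixed entry $i_0=0$ and the empty case $h=0$ (the all-descending permutation, responsible for the constant term), and checking that the cancellation of the $(1-\alpha_j)$ factors in the reduction to $(\star)$ is carried out correctly.
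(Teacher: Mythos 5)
Your proposal is correct, and it follows the paper's overall strategy of deducing Theorem~\ref{updown} from Theorem~\ref{multinom} by clearing the denominators $f(0)-f(j)$; the reduction to the polynomial identity $(\star)$ matches the paper's rearrangement step. Where you genuinely diverge is in how the resulting combinatorial identity is established. The paper expands the products $\prod(f(0)-f(j))$, collects the coefficient of each monomial $f(0)^{k-1-h}f(i_1)\cdots f(i_h)$, and recognizes the resulting alternating sum of multinomial coefficients as the up--down number via a cited identity of Carlitz, namely $\updown{k}{0,i_1,\dots,i_h}=\sum_{T\subseteq S}(-1)^{|S\setminus T|}N(T)$. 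You instead prove the equivalent statement self-containedly: you use the standard fact that $N(S)$ counts permutations with descent set contained in $S$, telescope the sum over supersets of $\mathrm{Des}(\sigma)$ (each $j\notin\mathrm{Des}(\sigma)$ contributing $\alpha_j+(1-\alpha_j)=1$), identify the up--down coefficient as the cardinality of an ascent class straight from Niven's definition, and finish with the complementation $\sigma\mapsto k+1-\sigma$. The two arguments encode the same inclusion--exclusion between ``descent set contained in $S$'' and ``descent set equal to $S$'', but yours replaces the external reference by a short direct counting argument (at the cost of invoking the block-decomposition interpretation of the multinomial coefficient), which makes the proof self-contained and arguably more transparent; the paper's version is shorter on the page but leans on the literature for the key identity.
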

Our last result concerns fitness functions
that are eventually constant.
For such functions 
we can express the concentrations $(y_k)_{k\geq0}$
in terms of the concentrations $(q_k)_{k\geq0}$
of the quasispecies associated to the sharp peak landscape 
fitness function.
Let $f$ be a fitness function which is eventually constant
equal to $c>0$.
Define $(q_k)_{k\geq0}$
as the solution to the recurrence relation $(\cR)$
with fitness function $(f(0),c,c,\dots)$, i.e.,
$$q_k\,=\,
(f(0)/c-1)\frac{a^k}{k!}
\sum_{i\geq1}\frac{i^k}{(f(0)/c)^i}\,.$$
\begin{theorem}\label{quasi}
Let $N\geq0$ be such that
$$f(N)\,\neq\,c\qquad
\text{and}\qquad
\forall k>N
\quad
f(k)\,=\,c\,.$$
Then, for all $k>N$,
\begin{multline*}
y_k\,=\,q_k+
\sum_{j=1}^N\frac{a^j}{j!}q_{k-j}\bigg(
\frac{f(j)}{f(0)-f(j)}-\frac{c}{f(0)-c}
\bigg)\\
\times\bigg(
1+\sum_{h=1}^{j-1}\,\sum_{0=i_0<\cdots<i_h<j}\,
\prod_{t=1}^h\binom{j-i_{t-1}}{j-i_t}\frac{f(i_t)}{f(0)-f(i_t)}
\bigg)\,.
\end{multline*}
\end{theorem}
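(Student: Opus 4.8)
The plan is to pass to ordinary generating functions and to exploit the fact that $f$ and the sharp--peak--type fitness $(f(0),c,c,\dots)$ agree except on the finitely many classes $1,\dots,N$. Introduce, as identities of formal power series,
\[
Y(x)=\sum_{k\geq0}y_kx^k,\qquad F(x)=\sum_{k\geq0}y_kf(k)x^k,
\]
and similarly $Q(x)=\sum_{k\geq0}q_kx^k$ together with $G(x)=\sum_{k\geq0}q_kg(k)x^k$, where $g(0)=f(0)$ and $g(k)=c$ for $k\geq1$. The first step is to rewrite the recurrence $(\cR)$ as a single power--series relation. Its $k$-th line reads $(f(0)-f(k))y_k=\sum_{0\leq j<k}y_jf(j)a^{k-j}/(k-j)!$; adding the missing $j=k$ term $y_kf(k)$ to both sides turns the right-hand side into the coefficient of $x^k$ in $F(x)e^{ax}$ and the left-hand side into $f(0)y_k$. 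After checking $k=0$ separately, this yields the clean identities $f(0)Y(x)=F(x)e^{ax}$ and, by the same computation, $f(0)Q(x)=G(x)e^{ax}$.

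The second step is to solve these. Writing $f(k)=c+(f(k)-c)$ gives $F(x)=cY(x)+P(x)$ with $P(x)=\sum_{j=0}^N y_j(f(j)-c)x^j$ a polynomial of degree at most $N$ and constant term $f(0)-c$; likewise $G(x)=(f(0)-c)+cQ(x)$. Substituting and solving for $Y$ and $Q$ gives
\[
Y(x)=\frac{P(x)e^{ax}}{f(0)-ce^{ax}},\qquad
Q(x)=\frac{(f(0)-c)e^{ax}}{f(0)-ce^{ax}}
\]
(the second being consistent with the closed form stated for $q_k$). Dividing one by the other cancels the transcendental factor entirely, leaving
\[
Y(x)=\frac{P(x)}{f(0)-c}\,Q(x)=Q(x)\Bigg(1+\sum_{j=1}^N \frac{f(j)-c}{f(0)-c}\,y_j\,x^j\Bigg).
\]
Reading off the coefficient of $x^k$ for $k>N$ (so that $k-j\geq0$ throughout) yields the compact milestone
\[
y_k=q_k+\sum_{j=1}^N \frac{f(j)-c}{f(0)-c}\,y_j\,q_{k-j}.
\]

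It remains to replace each $y_j$, $1\leq j\leq N$, by its explicit value and to recognise the stated summand. Plugging in the formula of Theorem~\ref{multinom} for $y_j$, the prefactor becomes $\frac{f(j)-c}{f(0)-c}\cdot\frac{a^j}{j!}\frac{f(0)}{f(0)-f(j)}$, and the elementary identity $\frac{f(j)-c}{f(0)-c}\cdot\frac{f(0)}{f(0)-f(j)}=\frac{f(j)}{f(0)-f(j)}-\frac{c}{f(0)-c}$ converts it into exactly the factor appearing in the theorem. For the bracketed sum I would set $i_0=0$ and verify that the multinomial weight of Theorem~\ref{multinom} equals the product of binomials through the telescoping
\[
\prod_{t=1}^h\binom{j-i_{t-1}}{j-i_t}
=\frac{j!}{i_1!\,(i_2-i_1)!\cdots(i_h-i_{h-1})!\,(j-i_h)!},
\]
which matches the two index sets and their weights term by term.

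I expect the main obstacle to be bookkeeping rather than conceptual: once the power--series identity $Y=\tfrac{P}{f(0)-c}\,Q$ is established the structure is settled, and the remaining work is to execute the two identities above (the algebraic rewriting of the fraction and the binomial telescoping) so that the summand assumes the exact stated form, while noting that any class with $f(j)=c$ contributes a vanishing term and hence raises no division--by--zero issue.
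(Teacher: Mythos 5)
Your proof is correct, but it takes a genuinely different route from the paper's. The paper argues by induction on $N$: it first establishes a decomposition lemma expressing $y_k(f)$ through the values $y_{k-j}(f^{(j)})$ of the recurrence for the shifted landscapes $f^{(j)}$ (where $f^{(j)}(0)=f(0)$ and $f^{(j)}(i)=f(i+j)$ for $i\geq1$), derives a companion recurrence $C_i(f)=1+\sum_{j=1}^{i-1}\binom{i}{j}\frac{f(j)}{f(0)-f(j)}C_{i-j}(f^{(j)})$ for the bracketed coefficients, and then carries out a bookkeeping-heavy induction identifying the coefficient of each $q_{k-i}$. Your generating-function argument replaces all of this: the identity $f(0)Y(x)=F(x)e^{ax}$ does encode $(\cR)$ faithfully (the $k=0$ coefficient is trivially satisfied), the splitting $F=cY+P$ with $P$ a polynomial of degree at most $N$ uses exactly the hypothesis that $f$ is constant beyond $N$, and the division by $f(0)-ce^{ax}$ is legitimate as a formal power series operation since its constant term $f(0)-c$ is positive under the standing assumption $f(0)>f(k)$. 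This yields the clean convolution $y_k=q_k+\sum_{j=1}^N\frac{f(j)-c}{f(0)-c}\,y_j\,q_{k-j}$ for $k>N$, and the two finishing identities check out: the fraction manipulation $\frac{f(j)-c}{f(0)-c}\cdot\frac{f(0)}{f(0)-f(j)}=\frac{f(j)}{f(0)-f(j)}-\frac{c}{f(0)-c}$ is a one-line verification, and the telescoping product of binomials reproduces exactly the multinomial weights of theorem~\ref{multinom}, so the summand matches the statement term by term. Your route is shorter, makes structurally transparent why the answer is a finite convolution of $(q_k)_{k\geq0}$ against a degree-$N$ polynomial correction, and recovers the closed form of $q_k$ for free from $Q(x)=(f(0)-c)e^{ax}/(f(0)-ce^{ax})$; the paper's route stays within finite algebraic manipulation of the recurrences, never invokes power series inverses, and produces the recursion for $C_i(f)$ along the way, which your argument never needs.
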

Finally we give a condition 
guaranteeing the existence of a quasispecies associated to $f$.
Let us recall that $(x_k)_{k\geq0}$
is a quasispecies associated to $f$ 
if it is a non negative solution of $(\cS)$,
$x_0>0$ and the sum of the $x_k$s is 1.
\begin{corollary}\label{serreur}
We have:

\parskip 0 pt
$\bullet$ If $f(0)\exa>\limsup_{n\to\infty}f(n)$,
the series associated to $(y_k)_{k\geq0}$ converges
and there exists a unique quasispecies associated to $f$.

$\bullet$ If $f(0)\exa<\liminf_{n\to\infty}f(n)$,
the series associated to $(y_k)_{k\geq0}$ diverges
and no quasispecies associated to $f$ exists.
\end{corollary}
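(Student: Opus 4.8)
The plan is to use Lemma~\ref{conv} to turn the whole statement into a question about the convergence of the series $\sum_{k\ge0}y_k$, and then to answer that question by squeezing $f$ between two fitness functions that are eventually constant, for which the series can be analysed exactly. First I would record two elementary facts about $(\cR)$. Since $f(0)>f(k)$ for $k\ge1$, an immediate induction shows that every $y_k$ is strictly positive, the term $j=0$ already contributing $y_0 f(0)a^k/k!>0$. Moreover the solution depends monotonically on the fitness: if $g$ is another admissible fitness function with $g(0)=f(0)$ and $g(k)\ge f(k)$ for all $k\ge1$, then $y_k^{g}\ge y_k^{f}$ for every $k$. This too follows by induction from $(\cR)$, since raising $f(k)$ enlarges the factor $1/(f(0)-f(k))$ and each weight $f(j)$ in the sum, while all the $y_j$ are non--negative.

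Next I would settle the case where the fitness is eventually constant, say $f(k)=c$ for $k>N$ with $0<c<f(0)$. Set $Y(t)=\sum_{k\ge0}y_k t^k$ and $U(t)=\sum_{k\ge0}y_k f(k)t^k$. Replacing $\Phi$ by $f(0)\exa$ turns $(\cS)$ into $f(0)y_k=\sum_{j=0}^k y_j f(j)a^{k-j}/(k-j)!$, which at the level of generating functions reads $f(0)Y(t)=U(t)e^{at}$. As $f(k)=c$ for $k>N$, one has $U(t)=cY(t)+P(t)$ with $P(t)=\sum_{k=0}^N y_k(f(k)-c)t^k$ a polynomial of degree at most $N$, whence
$$Y(t)\,=\,\frac{P(t)}{f(0)e^{-at}-c}\,,$$
the denominator having non--zero constant term $f(0)-c$. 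The unique positive real zero of the denominator is $t_*=a^{-1}\log(f(0)/c)$, the remaining zeros $t_*+2\pi i m/a$, $m\neq0$, having strictly larger modulus; moreover the polynomial $P$ can cancel only finitely many of these infinitely many zeros. Because all the coefficients $y_k$ are positive, Pringsheim's theorem forces the radius of convergence $R$ of $Y$ to be a singular point on the positive real axis, which leaves no possibility other than $R=t_*$. Hence $\sum_{k\ge0}y_k=Y(1)$ converges when $t_*>1$ and diverges when $t_*<1$, that is, exactly according as $f(0)\exa>c$ or $f(0)\exa<c$. (The same dichotomy can also be read off from Theorem~\ref{quasi}, which writes $y_k$ as a fixed finite linear combination of the shifted sharp--peak coefficients $q_{k-j}$.)

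It remains to combine the two ingredients. If $f(0)\exa>\limsup_{n\to\infty}f(n)$, choose $c$ with $\limsup_{n\to\infty}f(n)<c<f(0)\exa$, so that $f(n)<c$ for all $n$ beyond some $N$; the function $\tilde f$ equal to $f$ on $\{0,\dots,N\}$ and to $c$ afterwards is then admissible, eventually constant, dominates $f$, and satisfies $f(0)\exa>c$. By the previous step $\sum_k\tilde y_k$ converges, and by monotonicity $0<y_k\le\tilde y_k$, so $\sum_k y_k$ converges and Lemma~\ref{conv} provides the unique quasispecies. Symmetrically, if $f(0)\exa<\liminf_{n\to\infty}f(n)$, choose $c$ with $f(0)\exa<c<\liminf_{n\to\infty}f(n)$ (in particular $c<f(0)$); then $f(n)>c$ eventually, and the function $\hat f$ equal to $f$ on $\{0,\dots,N\}$ and to $c$ afterwards is admissible, eventually constant, dominated by $f$, and satisfies $f(0)\exa<c$. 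Hence $\sum_k\hat y_k$ diverges, and since $\hat y_k\le y_k$ the series $\sum_k y_k$ diverges, so by Lemma~\ref{conv} no quasispecies exists.

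The induction giving positivity and monotonicity and the bookkeeping in the comparison step are routine; the crux, and the step I expect to require the most care, is the sharp threshold for the eventually--constant case. One must show that the generating function $Y$ has its dominant singularity exactly at $t_*=a^{-1}\log(f(0)/c)$, with no accidental cancellation of the pole by the numerator $P$. It is the positivity of the $y_k$, through Pringsheim's theorem, that pins the radius of convergence at $t_*$ and hence locates the threshold precisely at $f(0)\exa=c$; transporting this threshold through the $\limsup$ and the $\liminf$ then yields the two cases of the corollary.
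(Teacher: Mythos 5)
Your proof is correct, and its skeleton coincides with the paper's: reduce everything to the convergence of $\sum_k y_k$ via Lemma~\ref{conv}, prove the monotone dependence of $(y_k)_{k\geq0}$ on the fitness function (this is exactly the paper's Lemma~\ref{compfitness}), settle the eventually constant case, and then squeeze $f$ between eventually constant landscapes chosen according to the $\limsup$ or the $\liminf$. Where you genuinely depart from the paper is in the crux, namely the sharp threshold for an eventually constant fitness. The paper treats the two directions by unrelated means: for convergence it invokes Theorem~\ref{quasi}, which writes $y_k$ as a fixed finite linear combination of the shifted sharp--peak coefficients $q_{k-j}$, whose series is quoted as convergent when $f(0)\exa>c$; for divergence it introduces in Lemma~\ref{serreursimple} the special landscape equal to $0$ on $\lbrace1,\dots,N\rbrace$ and to $c$ afterwards, and shows by a bare--hands computation (using $\Phi=f(0)\exa$ and $x_k=x_0a^k/k!$ for $k\leq N$) that a quasispecies would force $x_0\leq0$ or total mass $>1$. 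Your generating--function identity $Y(t)=P(t)/(f(0)e^{-at}-c)$ combined with Pringsheim's theorem handles both directions simultaneously and locates the radius of convergence exactly at $t_*=a^{-1}\log(f(0)/c)$; this is more self--contained (it re--derives, rather than assumes, the convergence of the sharp--peak series) at the cost of a small amount of complex analysis. The only step worth spelling out in a final write--up is that the formal identity $f(0)Y(t)=U(t)e^{at}$ obtained from $(\cR)$ really identifies $\sum_k y_kt^k$ with the Taylor series of the meromorphic function $P(t)/(f(0)e^{-at}-c)$, which is immediate since the denominator is non--zero at $t=0$; after that, positivity of the $y_k$ and the fact that the non--real zeros $t_*+2\pi im/a$ have strictly larger modulus do pin the dominant singularity at $t_*$ as you claim.
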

We remark that for fitness functions $f$
(verifying our assumption),
the above corollary corresponds to 
the error threshold phenomenon observed by Eigen.
Moreover,
the error threshold depends only 
on $a$, $f(0)$,
and the limiting behaviour of the fitness function $f$.

To finish this section,
we discuss the motivations for making the assumption
${f(0)>f(k)}$, $k\geq1$,
and why we are mainly interested in solutions
$(x_k)_{k\geq0}$
of $(\cS)$
satisfying $x_0>0$.
Let $K\geq0$,
we call $(x_k)_{k\geq0}$
a \emph{quasispecies distribution around $K$ associated to $f$},
if it is a non negative solution of $(\cS)$ such that
$$x_0\,=\,\cdots\,=\,x_{K-1}\,=\,0\,<\,x_K
\qquad\text{and}\qquad
\sum_{k\geq K}x_k\,=\,1\,.$$
\begin{lemma}
Let $K\geq0$, and define the mapping
$g:\N\lra[0,\infty[$ by
$$\forall k\geq 0\qquad
g(k)\,=\,f(K+k)\,.$$
The sequence $(x_k)_{k\geq0}$
is a quasispecies distribution around $K$ associated to $f$
if and only if the sequence $(x_{K+i})_{i\geq0}$
is a quasispecies distribution around $0$ associated to $g$.
\end{lemma}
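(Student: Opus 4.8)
The plan is to prove the equivalence by a direct verification from the definitions, the one essential ingredient being that the limiting mutation kernel $\exa a^{k-j}/(k-j)!$ is \emph{translation invariant}: it depends on $j$ and $k$ only through the difference $k-j$. For this reason, shifting every index by $K$ carries the system $(\cS)$ for $f$ into the system $(\cS)$ for $g$. I would formulate the statement as a bijection: a quasispecies distribution around $K$ for $f$ has $x_0=\cdots=x_{K-1}=0$ by definition, so it is completely determined by its tail $(x_{K+i})_{i\geq0}$, and conversely any sequence $(x_{K+i})_{i\geq0}$ extends to a full sequence with vanishing head by setting $x_k=0$ for $k<K$.

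First I would observe that, as soon as $x_0=\cdots=x_{K-1}=0$, the equations of $(\cS)$ for $f$ with index $k<K$ hold automatically: each term $x_jf(j)\exa a^{k-j}/(k-j)!$ with $j\leq k<K$ vanishes, and so does $x_k\Phi$. Hence for such sequences all the content of $(\cS)$ is concentrated in the equations with $k\geq K$, and these are the only ones that must be matched against the $g$-system.

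The main computation is then the shift itself. Writing $k=K+i$ with $i\geq0$ and using $x_j=0$ for $j<K$, the sum $\sum_{j=0}^{K+i}$ reduces to $\sum_{j=K}^{K+i}$; reindexing $j=K+m$ and noting $k-j=i-m$ together with $f(K+m)=g(m)$, the equation of $(\cS)$ for $f$ at index $K+i$ becomes
\[
0\,=\,\sum_{m=0}^{i}x_{K+m}\,g(m)\,\exa\frac{a^{i-m}}{(i-m)!}-x_{K+i}\Phi\,,
\]
which is precisely the equation of $(\cS)$ for $g$ at index $i$ applied to the tail, provided the normalising constants agree. That they do is the last algebraic point: since $x_j=0$ for $j<K$,
\[
\Phi\,=\,\sum_{j\geq0}x_jf(j)\,=\,\sum_{m\geq0}x_{K+m}\,g(m)\,,
\]
and the right-hand side is exactly the constant $\Phi$ attached to the $g$-system of the tail. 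Thus the $f$-equations of index $\geq K$ for the head-vanishing sequence correspond one-to-one with the $g$-equations of the tail.

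Finally I would transfer the remaining conditions, which is immediate: non-negativity of $(x_k)_{k\geq0}$ amounts to non-negativity of the tail together with the vanishing head, the strict inequality $x_K>0$ is the condition on the first tail term, and $\sum_{k\geq K}x_k=\sum_{i\geq0}x_{K+i}=1$ is the normalisation of the tail. Combining these yields both implications. I do not expect a genuine obstacle here: the only delicate points are keeping the index arithmetic in the convolution sum straight, and the small conceptual check that the normalising constant $\Phi$ is unchanged by the shift — which holds exactly because the deleted head $x_0,\dots,x_{K-1}$ carries no mass and therefore contributes nothing to $\sum_j x_jf(j)$.
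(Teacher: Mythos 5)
Your proof is correct and follows essentially the same route as the paper's: shift the indices by $K$ in the equations of $(\cS)$ with $k\geq K$, use $f(K+m)=g(m)$ and the vanishing head to identify the two systems and the two normalising constants $\Phi$. Your additional remarks (the equations with $k<K$ holding trivially and the explicit transfer of the positivity and normalisation conditions) are details the paper leaves implicit, but the argument is the same.
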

\begin{proof}
Let the sequence $(x_k)_{k\geq0}$ 
be a quasispecies distribution around $K$ associated to $f$.
Since $x_0=\cdots=x_{K-1}=0$,
for all $k\geq K$ we have
$$0\,=\,\sum_{j=K}^k x_jf(j)\exa\frac{a^{k-j}}{(k-j)!}
-x_k\sum_{j\geq K}x_jf(j)\,.$$
We set $i=k-K$ and $h=j-K$ in the above formula
and we see that for all $i\geq 0$,
\begin{align*}
0\,&=\,\sum_{h=0}^i x_{K+h}f(K+h)\exa\frac{a^{i-h}}{(i-h)!}
-x_{K+i}\sum_{h\geq0}x_{K+h}f(K+h)\\
&=\,\sum_{h=0}^i x_{K+h}g(h)\exa\frac{a^{i-h}}{(i-h)!}
-x_{K+i}\sum_{h\geq0}x_{K+h}g(h)\,.
\end{align*}
Therefore, the sequence
$(x_{K+i})_{i\geq0}$
is a quasispecies distribution around $0$ associated to $g$.
The converse implication is proved similarly.
\end{proof}
\begin{lemma}
Suppose there exists $K\geq1$ such that
$f(K)>\max_{0\leq k<K}f(k)$.
Then, for $k\in\lbrace\,0,\dots,K-1\,\rbrace$,
no quasispecies distribution around $k$ associated to $f$
exists.
\end{lemma}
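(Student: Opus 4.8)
The plan is to suppose, for contradiction, that for some $k\in\{0,\dots,K-1\}$ there is a quasispecies distribution $(x_i)_{i\geq0}$ around $k$, and then to extract a contradiction from a single equation of the system $(\cS)$, namely the one with index $K$. (Equivalently, by the previous lemma one may first pass to the shifted fitness $g(i)=f(k+i)$ and argue with a quasispecies around $0$; I will work directly, which is the same computation.) First I would read off the mean fitness $\Phi$. Since $x_0=\cdots=x_{k-1}=0$, the equation of $(\cS)$ with index $k$ keeps only its $j=k$ term and reads $0=x_kf(k)\exa-x_k\Phi$; as $x_k>0$ this gives $\Phi=f(k)\exa$.

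Next I would write the equation of $(\cS)$ with index $K$, discard the vanishing terms with $j<k$, substitute $\Phi=f(k)\exa$, cancel $\exa$, and split off the $j=K$ term. This produces the identity
$$x_K\big(f(k)-f(K)\big)\,=\,\sum_{j=k}^{K-1}x_jf(j)\frac{a^{K-j}}{(K-j)!}\,.$$
The decisive step is the sign analysis, and this is exactly where the hypothesis is used: since $k\in\{0,\dots,K-1\}$ we have $f(k)\leq\max_{0\leq j<K}f(j)<f(K)$, so $f(k)-f(K)<0$ and the left-hand side is $\leq0$ (recall $x_K\geq0$). Every summand on the right-hand side is $\geq0$ because $x_j\geq0$, $f(j)\geq0$ and $a>0$. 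Hence both sides vanish, and as each coefficient $a^{K-j}/(K-j)!$ is strictly positive, the right-hand side can vanish only if $x_jf(j)=0$ for every $j\in\{k,\dots,K-1\}$. The term $j=k$ yields $x_kf(k)=0$, and since $x_k>0$ we conclude $f(k)=0$, contradicting the positivity of the fitness.

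The main obstacle is conceptual rather than computational: one has to recognise that the hypothesis $f(K)>\max_{0\leq j<K}f(j)$ should be fed into the single index-$K$ equation, where it turns into a clash between a non-positive left-hand side and a non-negative right-hand side. The substitution of $\Phi$ must be carried out carefully, using that all concentrations below $k$ vanish so that the index-$k$ equation really does pin down $\Phi$. I note finally that the last line is the only place positivity of $f$ is invoked; if one allowed $f(k)=0$, then $\Phi=0$ and the point mass at $k$ would itself solve $(\cS)$, so this positivity is genuinely needed.
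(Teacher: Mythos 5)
Your proof is correct and follows essentially the same route as the paper's: extract $\Phi=f(k)e^{-a}$ from the index-$k$ equation, then substitute it into the index-$K$ equation and observe the sign clash forced by $f(k)<f(K)$. The only minor difference is in the endgame: the paper assumes $x_k,\dots,x_{K-1}>0$ and concludes $x_K<0$ directly, whereas you work only with non-negativity and are led to the residual possibility $f(k)=0$, which you rightly flag as a degenerate case (the point mass at $k$ then solves $(\cS)$) that the paper's argument also implicitly excludes.
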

\begin{proof}
Let us suppose that the sequence $(x_k)_{k\geq0}$
is a solution of $(\cS)$.
Let $k\in\lbrace\,0,\dots,K-1\,\rbrace$
and let us suppose further that $x_0=\cdots=x_{k-1}=0$
and $x_k\neq0$.
We will show that if $x_k>0,\dots,x_{K-1}>0$,
then necessarily $x_K<0$.
On one hand, 
writing down the $K$--th equation of $(\cS)$
we see that
$$x_K\,=\,\frac{1}{\Phi-f(K)\exa}
\sum_{j=k}^{K-1}x_jf(j)\exa\frac{a^{K-j}}{(K-j)!}\,.$$
On the other hand, 
writing down the $k$--th equation of $(\cS)$,
since $x_0=\cdots=x_{k-1}=0$ and $x_k>0$,
we conclude that $\Phi=f(k)\exa$.
Since $f(k)<f(K)$, 
if $x_k>0,\dots,x_{K-1}>0$,
necessarily $x_K<0$.
This implies that no quasispecies distribution
around $k$ associated to $f$ exists.
\end{proof}

The above lemmas justify the hypothesis on the fitness function
$f$, as well as the search for quasispecies distribution around
$0$ associated to $f$.
From now onwards, if $(x_k)_{k\geq0}$
is quasispecies distribution around $0$
associated to $f$,
and when there is no confusion,
we will simply say that $(x_k)_{k\geq0}$
is a quasispecies.

\section{Related results}
We have given three different explicit formulas
for the stationary solutions
of the system:
$$x_k'(t)\,=\,
\sum_{j=0}^k x_j(t)f(j)\exa\frac{a^{k-j}}{(k-j)!}
-x_k(t)\sum_{j=0}^\infty x_j(t)f(j)\,,\qquad
k\geq0\,,$$
As we have pointed out in the introduction,
this infinite system of differential equations
arises from Eigen's system of differential equations:
$$x_k'(t)\,=\,
\sum_{j=0}^\ell x_j(t)f(j)M_H(j,k)
-x_k(t)\sum_{j=0}^\ell x_j(t)f(j)\,,\qquad
0\leq k\leq \ell\,,$$
when considering the asymptotic regime
$$\ell\lra\infty\,,\qquad
q\lra0\,,\qquad
\ell q\lra a\in[0,\infty[\,\,.$$
Eigen's system of differential equations 
might be defined with greater generality:
given an at most countable set of types $\cG$,
a non negative fitness function $f$ on $\cG$,
and a stochastic matrix $M=\big(M(u,v),u,v\in\cG\big)$,
Eigen's model becomes
$$(*)\qquad
x_v'(t)\,=\,\sum_{u\in\cG}x_u(t)f(u)M(u,v)
-x_v(t)\sum_{u\in\cG}x_u(t)f(u)\,,\quad u\in\cG\,.$$
Define the matrix $W$ by setting
$$\forall u,v\in\cG\,,\qquad
W(u,v)\,=\,f(u)M(u,v)\,.$$
For a finite state space $\cG$
and under the hypothesis that the matrix $W$
is irreducible,
an application of the Perron--Frobenius
theorem for positive matrices 
shows that the system $(*)$
has a unique stationary solution
which is globally stable~\cite{TM,JER,Jones77,BK83}.
A similar result was proven by Moran~\cite{Moran76}
for a discrete--time version of this model:
$$(**)\qquad
x_v(n+1)\,=\,\frac{
\displaystyle\sum_{u\in\cG} x_u(n)f(u)M(u,v)}{
\displaystyle\sum_{u\in\cG} x_u(n)f(u)}\,,\quad
u\in\cG\,.$$
Once again, an application of the Perron--Frobenius
theorem shows that the dynamical system $(**)$
has a unique fixed point, which is globally stable.
Of course, the stationary solution
of the continuous dynamical system
and the fixed point of the discrete dynamical system
are the same.
Moran also extended this result~\cite{Moran76,Moran77}
to the case where $\cG=\Z$ and mutations 
only happen between nearest neighbours,
i.e., for $q\in]0,1/2[$ and $i\in\Z$,
the mutation matrix $M$ is 
defined by:
$$M(i,j)\,=\,\begin{cases}
\quad q &\quad\text{if}\ j=i\pm 1\,,\\
\quad 1-2q &\quad\text{if}\ j=i\,,\\
\quad 0 &\quad\text{otherwise}\,.\\
\end{cases}$$
Kingman~\cite{Kingman} further generalises Moran's result.
Let $\cG=\N$ and make the following assumptions:

$\bullet$ The fitness function is positive and bounded,
i.e., there exists a constant $C>0$ such that
$$\forall k\geq0\,,\qquad 0\,<\,f(k)\,<\,C\,.$$

$\bullet$ The mutation matrix $M$ is irreducible and aperiodic.

Let $\l$ be the spectral radius of the matrix $W$. 
Kingman then shows that if
$$\limsup_{k\to\infty}f(k)\,<\,\l\,,$$
then there exists a unique
positive
fixed point of $(**)$ having $1$
as the sum if its components.
Moreover, this fixed point is globally stable.
Kingman's result generalises the first statement
of our corollary~\ref{serreur}.
Indeed, in our setting $\s\exa$ corresponds to 
the spectral radius $\l$.
Our result, however, 
does not follow directly from Kingman's result,
for he assumes the matrix $W$ to be recurrent,
which is not verified in our case.
Kingman's proof,
which is based on an infinite dimensional version
of the Perron--Frobenius theorem,
could be extended to show 
the existence of a quasispecies, but not the uniqueness.
We have therefore chosen to exploit the obtained
explicit formulas  to derive an analogous 
of Kingman's result directly.
This procedure has not only allowed us
to retrieve Kingman's result
in our particular setting,
but also to give a similar condition
under which a quasispecies cannot be formed.

\section{Proof of theorems \ref{fitness}, 
\ref{multinom} and \ref{updown}}
\begin{proof}[Proof of theorem~\ref{fitness}]
Let us suppose that $(x_k)_{k\geq 0}$ 
is a quasispecies.
Let us show that, for all $k\geq 1$,
$$f(k)\,=\,\frac{f(0)}{x_k}
\sum_{j=0}^k (-1)^j\frac{a^j}{j!}x_{k-j}\,.$$
We will make the proof by induction.
The sequence $(x_k)_{k\geq0}$ is a quasispecies,
in particular, $x_0>0$ and $\Phi=f(0)\exa$.
Replacing $\Phi$ by $f(0)\exa$ in $(\cS)$
and arranging the terms gives
$$f(k)\,=\,\frac{1}{x_k}\Bigg(
x_kf(0)-\sum_{j=0}^{k-1}x_jf(j)\frac{a^{k-j}}{(k-j)!}
\Bigg)\,,\qquad k\geq1\,.$$
In particular, for $k=1$,
$$f(1)\,=\,\frac{f(0)}{x_1}(x_1-ax_0)\,.$$
So the result holds for $k=1$.
We fix now $k>1$ and we suppose
that the result holds up to $k-1$.
We replace the values of $f(1),\dots,f(k-1)$
in the above formula and we obtain
$$f(k)\,=\,\frac{f(0)}{x_k}\Bigg(
x_k-\sum_{j=0}^{k-1}\sum_{h=0}^j
(-1)^h x_{j-h}\frac{a^h}{h!}\frac{a^{k-j}}{(k-j)!}
\Bigg)\,.
$$
Let us fix $i\in\lbrace\,1,\dots,k\,\rbrace$
and let us look for the coefficient of $x_{k-i}$
in the above expression, this coefficient is
$$
-\sum_{\genfrac{}{}{0pt}{1}{0\leq h\leq j<k}{j-h=k-i}}
(-1)^h\frac{a^h}{h!}\frac{a^{k-j}}{(k-j)!}\,=\,
-\sum_{0\leq h<i}
(-1)^h\frac{a^h}{h!}\frac{a^{i-h}}{(i-h)!}\,
=\,(-1)^i\frac{a^i}{i!}\,,
$$
which concludes the proof of the theorem.
\end{proof}
\begin{proof}[Proof of theorem~\ref{multinom}]
We show that,
for all $k\geq 1$,
$$y_k\,=\,\frac{a^k}{k!}\frac{f(0)}{f(0)-f(k)}\\
\Bigg(1+\!\!\!\!\!
\sum_{\genfrac{}{}{0pt}{1}
{1\leq h<k}
{1\leq i_1<\dots<i_h<k}}\!\!\!\!\!
\frac{k!}{i_1!(i_2-i_1)!\dots(k-i_h)!}
\prod_{t=1}^h\frac{f(i_t)}{f(0)-f(i_t)}
\Bigg)\,.$$
Arranging the terms in $(\cR)$ gives:
$$y_k\big(f(0)-f(k)\big)\,=\,
\sum_{j=0}^{k-1}y_jf(j)\frac{a^{k-j}}{(k-j)!}\,,\qquad
k\geq 1\,.$$
We make the following changes of variables:
\begin{align*}
z_0\,&=\,y_0\,,\qquad
&&\phantom{=} g(0)\,=\,f(0)\,,\\
z_j\,&=\,\frac{y_j}{a^j}\big(f(0)-f(j)\big)\,,\qquad
&&\phantom{=} g(j)\,\,=\,\frac{f(j)}{f(0)-f(j)}\,,\qquad j\geq1\,.
\end{align*}
With these changes of variables,
the recurrence relation becomes
$$z_k\,=\,\sum_{j=0}^{k-1}z_j\frac{g(j)}{(k-j)!}\,,\qquad
k\geq1\,.$$
We iterate this formula and we obtain, for all $k\geq 1$,
$$z_k=z_0g(0)\Bigg(
\frac{1}{k!}+\sum_{h=1}^{k-1}\,\sum_{1\leq i_1<\cdots<i_h<k}
\frac{1}{i_1!(i_2-i_1)!\cdots(k-i_h)!}
\prod_{t=1}^h g(i_t)
\Bigg).$$
We replace $z_0,z_k$ and $g(0),\cdots,g(k-1)$
by their respective values and we obtain the desired result.
\end{proof}
\begin{proof}[Proof of theorem~\ref{updown}]
We show that, for all $k\geq1$,
$$
y_k\,=\,\frac{a^k}{k!}
\Bigg(\prod_{j=1}^k\frac{f(0)}{f(0)-f(j)}\Bigg)
\!\!\!\sum_{\genfrac{}{}{0pt}{1}{0\leq h<k}
{0=i_0<\dots<i_h<k}}\!\!\!\!\!\!\!
\updown{k}{i_0,\dots,i_h}\prod_{t=0}^h\frac{f(i_t)}{f(0)}\,.
$$
We take the formula from theorem~\ref{multinom} and
we set $\prod_{j=1}^k\big(f(0)-f(j)\big)$
as a common denominator, we get
\begin{multline*}
y_k\,=\,\frac{a^k}{k!}
\frac{f(0)}{\displaystyle\prod_{1\leq j\leq k}\big(f(0)-f(j)\big)}
\bigg(
\prod_{1\leq j<k}\big(f(0)-f(j)\big)\\
+\sum_{h=1}^{k-1}\sum_{1\leq i_1<\dots<i_h<k}
\frac{k!}{i_1!(i_2-i_1)!\dots(k-i_h)!}
\prod_{t=1}^h f(i_t)\!\!\!\!
\prod_{\genfrac{}{}{0pt}{1}{1\leq j<k}{j\neq i_1,\dots,i_h}}
\big(f(0)-f(j)\big)
\bigg)\,.
\end{multline*}
The expression in the large parenthesis
is an homogeneous polynomial
of degree ${k-1}$ in the variables $f(0),\dots,f(k-1)$.
For each ${h\in\lbrace\,1,\dots,k-1\,\rbrace}$
and ${1\leq i_1<\dots<i_h<k}$,
we get a monomial of the form
$f(0)^{k-1-h}f(i_1)\cdots f(i_h)$.
We calculate the coefficient of each of these monomials
and we conclude that
\begin{multline*}
y_k
=\frac{a^k}{k!}
\frac{f(0)}{\displaystyle\prod_{1\leq j\leq k}\big(f(0)-f(j)\big)}
\Bigg(f(0)^{k-1}
+\sum_{h=1}^{k-1}\sum_{1\leq i_1<\dots<i_h<k}f(0)^{k-1-h}
\prod_{t=1}^h f(i_t)\\
\times\bigg(
(-1)^h+\sum_{t=1}^h\sum_{1\leq j_1<\dots<j_t\leq h}(-1)^{h-t}
\frac{k!}{i_{j_1}!(i_{j_2}-i_{j_1})!\dots(k-i_{j_t})!}
\bigg)
\Bigg)\,.
\end{multline*}
We know from~\cite{Car} that
$$\updown{k}{0,i_1,\dots,i_h}\,=\,
(-1)^h+\sum_{t=1}^h\sum_{1\leq j_1<\dots<j_t\leq h}(-1)^{h-t}
\frac{k!}{i_{j_1}!(i_{j_2}-i_{j_1})!\dots(k-i_{j_t})!}\,,$$
which implies the desired result.
\end{proof}
\section{Proof of theorem~\ref{quasi}}
Let us introduce some notation before jumping
into the proof of the theorem.
For a fitness landscape $f$ and $k\geq 0$,
we define the fitness landscape $f^{(k)}$
obtained by shifting $k$
places to the left the fitnesses of the 
different classes and keeping the fitness
of the class $0$,
that is,
$$\forall j\geq0\qquad
f^{(k)}(j)\,=\,\begin{cases}
\quad f(0)&\quad\text{if}\ j=0\,,\\
\quad f(j+k)&\quad\text{if}\ j\geq1\,.
\end{cases}$$

\begin{tikzpicture}[scale=0.54]
\draw (0,0) -- (11,0) ;
\draw (0.5,0) -- (0.5,5) ;
\draw (1.5,0) -- (1.5,4) ;
\draw (2.5,0) -- (2.5,2) ;
\draw (3.5,0) -- (3.5,3) ;
\draw (4.5,0) -- (4.5,1) ;
\draw (5.5,0) -- (5.5,1) ;
\draw (6.5,0) -- (6.5,1) ;
\draw (7.5,0) -- (7.5,1) ;
\draw (8.5,0) -- (8.5,1) ;
\draw (9.5,0) -- (9.5,1) ;
\draw (10.5,0) -- (10.5,1) ;
\draw (9,5) node[right]{$f$} ;
\draw (0.5,5) node{$\bullet$} ;
\draw (1.5,4) node{$\bullet$} ;
\draw (2.5,2) node{$\bullet$} ;
\draw (3.5,3) node{$\bullet$} ;
\draw (4.5,1) node{$\bullet$} ;
\draw (5.5,1) node{$\bullet$} ;
\draw (6.5,1) node{$\bullet$} ;
\draw (7.5,1) node{$\bullet$} ;
\draw (8.5,1) node{$\bullet$} ;
\draw (9.5,1) node{$\bullet$} ;
\draw (10.5,1) node{$\bullet$} ;

\draw (14,0) -- (25,0) ;
\draw (14.5,0) -- (14.5,5) ;
\draw (15.5,0) -- (15.5,2) ;
\draw (16.5,0) -- (16.5,3) ;
\draw (17.5,0) -- (17.5,1) ;
\draw (18.5,0) -- (18.5,1) ;
\draw (19.5,0) -- (19.5,1) ;
\draw (20.5,0) -- (20.5,1) ;
\draw (21.5,0) -- (21.5,1) ;
\draw (22.5,0) -- (22.5,1) ;
\draw (23.5,0) -- (23.5,1) ;
\draw (24.5,0) -- (24.5,1) ;
\draw (23,5) node[right]{$f^{(1)}$} ;
\draw (14.54,5) node{$\bullet$} ;
\draw (15.5,2) node{$\bullet$} ;
\draw (16.5,3) node{$\bullet$} ;
\draw (17.5,1) node{$\bullet$} ;
\draw (18.5,1) node{$\bullet$} ;
\draw (19.5,1) node{$\bullet$} ;
\draw (20.5,1) node{$\bullet$} ;
\draw (21.5,1) node{$\bullet$} ;
\draw (22.5,1) node{$\bullet$} ;
\draw (23.5,1) node{$\bullet$} ;
\draw (24.5,1) node{$\bullet$} ;

\draw (0,-10) -- (11,-10) ;
\draw (0.5,-10) -- (0.5,-5) ;
\draw (1.5,-10) -- (1.5,-7) ;
\draw (2.5,-10) -- (2.5,-9) ;
\draw (3.5,-10) -- (3.5,-9) ;
\draw (4.5,-10) -- (4.5,-9) ;
\draw (5.5,-10) -- (5.5,-9) ;
\draw (6.5,-10) -- (6.5,-9) ;
\draw (7.5,-10) -- (7.5,-9) ;
\draw (8.5,-10) -- (8.5,-9) ;
\draw (9.5,-10) -- (9.5,-9) ;
\draw (10.5,-10) -- (10.5,-9) ;
\draw (9,-5) node[right]{$f^{(2)}$} ;
\draw (0.5,-5) node{$\bullet$} ;
\draw (1.5,-7) node{$\bullet$} ;
\draw (2.5,-9) node{$\bullet$} ;
\draw (3.5,-9) node{$\bullet$} ;
\draw (4.5,-9) node{$\bullet$} ;
\draw (5.5,-9) node{$\bullet$} ;
\draw (6.5,-9) node{$\bullet$} ;
\draw (7.5,-9) node{$\bullet$} ;
\draw (8.5,-9) node{$\bullet$} ;
\draw (9.5,-9) node{$\bullet$} ;
\draw (10.5,-9) node{$\bullet$} ;

\draw (14,-10) -- (25,-10) ;
\draw (14.5,-10) -- (14.5,-5) ;
\draw (15.5,-10) -- (15.5,-9) ;
\draw (16.5,-10) -- (16.5,-9) ;
\draw (17.5,-10) -- (17.5,-9) ;
\draw (18.5,-10) -- (18.5,-9) ;
\draw (19.5,-10) -- (19.5,-9) ;
\draw (20.5,-10) -- (20.5,-9) ;
\draw (21.5,-10) -- (21.5,-9) ;
\draw (22.5,-10) -- (22.5,-9) ;
\draw (23.5,-10) -- (23.5,-9) ;
\draw (24.5,-10) -- (24.5,-9) ;
\draw (23,-5) node[right]{$f^{(3)}$} ;
\draw (14.5,-5) node{$\bullet$} ;
\draw (15.5,-9) node{$\bullet$} ;
\draw (16.5,-9) node{$\bullet$} ;
\draw (17.5,-9) node{$\bullet$} ;
\draw (18.5,-9) node{$\bullet$} ;
\draw (19.5,-9) node{$\bullet$} ;
\draw (20.5,-9) node{$\bullet$} ;
\draw (21.5,-9) node{$\bullet$} ;
\draw (22.5,-9) node{$\bullet$} ;
\draw (23.5,-9) node{$\bullet$} ;
\draw (24.5,-9) node{$\bullet$} ;
\end{tikzpicture}

\vspace*{1 em}
For a fitness landscape $f$,
we denote by $(y_k(f))_{k\geq1}$
the solution to the recurrence $(\cR)$
corresponding to the fitness landscape $f$.
We start by establishing the following lemma,
which expresses the value of $y_k(f)$
as a function of $y_{k-1}(f^{(1)}),\dots,y_1(f^{(k-1)})$.
\begin{lemma}\label{shift}
For all $k\geq 2$, we have
$$y_k(f)\,=\,
\frac{a^k}{k!}\frac{f(0)}{f(0)-f(k)}+
\sum_{j=1}^{k-1}\frac{a^j}{j!}
\frac{f(j)}{f(0)-f(j)}
y_{k-j}(f^{(j)})
\,.$$
\end{lemma}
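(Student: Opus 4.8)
The plan is to reduce everything to the explicit multinomial expansion of Theorem~\ref{multinom} and to recover the claimed right--hand side by sorting the chains appearing in that expansion according to their smallest index. First I would record that the shifted landscape $f^{(j)}$ again satisfies the standing assumption: since $f^{(j)}(0)=f(0)$ and $f^{(j)}(i)=f(i+j)$ for $i\geq1$, the inequality $f(0)>f(i+j)$ (valid because $i+j\geq1$) gives $f^{(j)}(0)>f^{(j)}(i)$. Hence Theorem~\ref{multinom} applies to every $f^{(j)}$, and in particular the factor $f(0)/(f(0)-f^{(j)}(k-j))=f(0)/(f(0)-f(k))$ is common to $y_k(f)$ and to each $y_{k-j}(f^{(j)})$, since $f^{(j)}(k-j)=f(k)$.

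Next I would apply Theorem~\ref{multinom} to $y_k(f)$, writing it as $\frac{a^k}{k!}\frac{f(0)}{f(0)-f(k)}$ times a sum of $1$ and of one term per chain $1\leq i_1<\dots<i_h<k$. The term $1$ is precisely the first summand $\frac{a^k}{k!}\frac{f(0)}{f(0)-f(k)}$ of the claimed formula. For the remaining terms I would partition the chains according to the value $j:=i_1$ of their smallest index, with $j$ ranging over $\{1,\dots,k-1\}$. Fixing $j$ and setting $l_s:=i_{s+1}-j$ for $s=1,\dots,h-1$ establishes a bijection between chains with $i_1=j$ and (possibly empty) chains $1\leq l_1<\dots<l_{h-1}<k-j$.

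The key computational step is the factorisation of the multinomial coefficient along this bijection, namely $\frac{k!}{j!\,(i_2-j)!\cdots(k-i_h)!}=\binom{k}{j}\frac{(k-j)!}{l_1!(l_2-l_1)!\cdots((k-j)-l_{h-1})!}$, together with the identity $f(i_{s+1})=f^{(j)}(l_s)$. Substituting these, using $\frac{a^k}{k!}\binom{k}{j}=\frac{a^j}{j!}\frac{a^{k-j}}{(k-j)!}$, pulling out the common factor $\frac{a^j}{j!}\frac{f(j)}{f(0)-f(j)}$, and recognising that what remains is exactly $\frac{a^{k-j}}{(k-j)!}\frac{f(0)}{f(0)-f(k)}$ times the bracketed sum of Theorem~\ref{multinom} for the landscape $f^{(j)}$, I would identify the sum over all chains with $i_1=j$ with $\frac{a^j}{j!}\frac{f(j)}{f(0)-f(j)}\,y_{k-j}(f^{(j)})$. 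Summing over $j$ then yields the claimed formula. The main obstacle is purely bookkeeping: making the reindexing $l_s=i_{s+1}-j$ and the multinomial splitting line up cleanly, and in particular treating the degenerate case $h=1$ (a chain reduced to its minimum $j$), which must reproduce the empty--chain, i.e.\ the constant $1$, term in the expansion of $y_{k-j}(f^{(j)})$.
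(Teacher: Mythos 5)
Your proof is correct and takes essentially the same route as the paper's: both expand $y_k(f)$ via Theorem~\ref{multinom}, partition the chains according to their smallest index $j=i_1$, shift the remaining indices by $j$ so that $f(i_{t})$ becomes $f^{(j)}(i_t-j)$, and recognise the resulting inner sum (using $f^{(j)}(0)=f(0)$ and $f^{(j)}(k-j)=f(k)$) as the Theorem~\ref{multinom} expansion of $y_{k-j}(f^{(j)})$. Your multinomial factorisation $\frac{k!}{j!\,(i_2-j)!\cdots(k-i_h)!}=\binom{k}{j}\frac{(k-j)!}{l_1!\cdots((k-j)-l_{h-1})!}$ and your handling of the singleton chains (the degenerate $h=1$ case) correspond exactly to the paper's explicit separation of the $\frac{k!}{i_1!(k-i_1)!}$ term.
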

\begin{proof}
Consider the identity of theorem~\ref{multinom},
$$
y_k(f)=\frac{a^k}{k!}\frac{f(0)}{f(0)-f(k)}
\bigg(1+\!\!\!\!\!\!\!\!
\sum_{\genfrac{}{}{0pt}{1}{1\leq h<k}
{1\leq i_1<\cdots<i_h<k}}\!\!\!\!\!\!\!
\frac{k!}{i_1!(i_2-i_1)!\cdots(k-i_h)!}
\prod_{t=1}^h\frac{f(i_t)}{f(0)-f(i_t)}
\bigg),$$
and decompose the above sum according to the value of 
the first index:
\begin{multline*}
\sum_{\genfrac{}{}{0pt}{1}{1\leq h<k}
{1\leq i_1<\cdots<i_h<k}}\!\!\!\!
\frac{k!}{i_1!(i_2-i_1)!\dots(k-i_h)!}
\prod_{t=1}^h\frac{f(i_t)}{f(0)-f(i_t)}
=
\sum_{i_1=1}^{k-1}\frac{f(i_1)}{f(0)-f(i_1)}\\
\times
\bigg(
\frac{k!}{i_1!(k-i_1)!}+\!\!\!\!\!\!
\sum_{\genfrac{}{}{0pt}{1}{2\leq h\leq k-i_1}
{i_1<i_2<\cdots<i_h<k}}\!\!\!\!
\frac{k!}{i_1!(i_2-i_1)!
\cdots(k-i_h)!}
\prod_{t=2}^h\frac{f(i_t)}{f(0)-f(i_t)}
\bigg)\,.
\end{multline*}
We make the following changes of variables:
\begin{multline*}
\hfill
j\,=\,i_1\,,\qquad
h'\,=\,h-1\,,\qquad
t'=t-1\,,\hfill\\
\hfill
i'_1\,=\,i_2-i_1\,,\qquad
\dots\,,\qquad
i'_{h'}\,=\,i_h-i_1\,.
\hfill
\end{multline*}
Note that in particular we have
$f(i_{t'})=f(i'_{t'}+i_1)=f^{(j)}(i'_{t'})$.
The previous expression becomes:
\begin{multline*}
\sum_{j=1}^{k-1}\frac{k!}{j!(k-j)!}\frac{f(j)}{f(0)-f(j)}\\
\times\bigg(1+
\sum_{\genfrac{}{}{0pt}{1}{1\leq h'<k-j}
{1\leq i'_1<\cdots<i'_h<k-j}}
\frac{(k-j)!}{i'_1!(i'_2-i'_1)!
\cdots(k-j-i'_{h'})!}
\prod_{t'=1}^{h'}\frac{f^{(j)}(i'_{t'})}{f^{(j)}(0)-f^{(j)}(i'_{t'})}
\bigg)\,.
\end{multline*}
Since
$f(0)=f^{(j)}(0)$
and 
$f(k)=f^{(j)}(k-j)$
for all $j\in\lbrace\,1,\dots,k-1\,\rbrace$,
taking away the $'$ from the indexes, we see that
\begin{multline*}
y_k(f)\,=\,\frac{a^k}{k!}\frac{f(0)}{f(0)-f(k)}+
\sum_{j=1}^{k-1}\frac{a^j}{j!}\frac{f(j)}{f(0)-f(j)}
\times\frac{a^{k-j}}{(k-j)!}
\frac{f^{(j)}(0)}{f^{(j)}(0)-f^{(j)}(k-j)}\\
\times\bigg(1+
\sum_{\genfrac{}{}{0pt}{1}{1\leq h<k-j}
{1\leq i_1<\cdots<i_h<k-j}}
\frac{(k-j)!}{i_1!(i_2-i_1)!
\cdots(k-j-i_h)!}
\prod_{t=1}^h\frac{f^{(j)}(i_t)}{f^{(j)}(0)-f^{(j)}(i_t)}
\bigg)\,.
\end{multline*}
Yet, by theorem~\ref{multinom},
\begin{multline*}
y_{k-j}(f^{(j)})\,=\,\frac{a^{k-j}}{(k-j)!}
\frac{f^{(j)}(0)}{f^{(j)}(0)-f^{(j)}(k-j)}\times\\
\bigg(1+
\sum_{h=1}^{k-j}\sum_{1\leq i_1<\dots<i_h<k-j}
\frac{(k-j)!}{i_1!(i_2-i_1)!\cdots(k-j-i_h)!}
\prod_{t=1}^h\frac{f^{(j)}(i_t)}{f^{(j)}(0)-f^{(j)}(i_t)}
\bigg)\,.
\end{multline*}
We replace in the previous formula and we conclude.
\end{proof}
Let $f:\N\lra[0,\infty[$
be a fitness function which is eventually constant,
i.e., there exist $N\geq0$
and a positive constant $c$ such that
$$f(N)\,\neq\,c\qquad
\text{et}\qquad
\forall k>N\quad
f(k)\,=\,c\,.$$
Let $\big(y_k(f)\big)_{k\geq0}$ 
be the solution to the recurrence relation $(\cR)$
for the fitness function $f$. 
We want to show that, for all $k>N$,
\begin{multline*}
y_k(f)\,=\,
q_k+
\sum_{j=1}^N\frac{a^j}{j!}q_{k-j}\bigg(
\frac{f(j)}{f(0)-f(j)}-\frac{c}{f(0)-c}
\bigg)\\
\times\Bigg(
1+\sum_{h=1}^{j-1}\,\sum_{0=i_0<\cdots<i_h<j}\,
\prod_{t=1}^h\binom{j-i_{t-1}}{j-i_t}\frac{f(i_t)}{f(0)-f(i_t)}
\Bigg)\,,
\end{multline*}
where $(q_k)_{k\geq 0}$ 
is the solution to the relation of recurrence $(\cR)$ 
for the sharp peak fitness landscape $(f(0),c,c,\dots)$, i.e.,
$$q_k\,=\,\big(f(0)/c-1)
\frac{a^k}{k!}\sum_{i\geq 1}
\frac{i^k}{(f(0)/c)^i}\,,\qquad
k\geq1\,.$$
Before proceeding to the proof of the theorem~\ref{quasi},
we introduce the following notation in order
to simplify the expression of the formula we want to prove.
For a fitness function $f$ and $l\geq1$,
we set
$$C_l(f)\,=\,
1+\sum_{h=1}^{l-1}\sum_{0=i_0<i_1<\cdots<i_h<l}
\prod_{t=1}^h\binom{l-i_{t-1}}{l-i_t}
\frac{f(i_t)}{f(0)-f(i_t)}\,.$$
\begin{lemma}\label{coefs}
The coefficients $C_i$, $i\geq2$, 
satisfy the recurrence relation
$$C_i(f)\,=\,
1+\sum_{j=1}^{i-1}\binom{i}{j}
\frac{f(j)}{f(0)-f(j)}C_{i-j}(f^{(j)})\,.$$
\end{lemma}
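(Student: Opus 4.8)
The plan is to prove the recurrence by the same decompose-and-shift strategy used in the proof of Lemma~\ref{shift}: I would take the sum defining $C_i(f)$ and split it according to the value of the first index $i_1$ of each chain. Starting from
$$C_i(f)=1+\sum_{h=1}^{i-1}\sum_{0=i_0<i_1<\cdots<i_h<i}\prod_{t=1}^h\binom{i-i_{t-1}}{i-i_t}\frac{f(i_t)}{f(0)-f(i_t)}\,,$$
the leading $1$ accounts for the empty chain $h=0$. Fixing $i_1=j$ with $1\le j<i$, the factor corresponding to $t=1$ equals
$$\binom{i-i_0}{i-i_1}\frac{f(i_1)}{f(0)-f(i_1)}=\binom{i}{j}\frac{f(j)}{f(0)-f(j)}\,,$$
which I would pull out of the product. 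The remaining factors run over $t=2,\dots,h$ and depend only on the subchain $i_1<i_2<\cdots<i_h<i$.

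The key step is the change of variables $i'_s=i_{s+1}-j$ for $s=0,\dots,h-1$, so that $i'_0=0$ and $i'_{h-1}=i_h-j<i-j$; equivalently $i_t=i'_{t-1}+j$. Under this shift I would check that, for $t\ge2$,
$$\binom{i-i_{t-1}}{i-i_t}=\binom{(i-j)-i'_{t-2}}{(i-j)-i'_{t-1}}\,,\qquad \frac{f(i_t)}{f(0)-f(i_t)}=\frac{f^{(j)}(i'_{t-1})}{f^{(j)}(0)-f^{(j)}(i'_{t-1})}\,,$$
the first because $i-i_{t-1}=(i-j)-i'_{t-2}$ and $i-i_t=(i-j)-i'_{t-1}$, and the second because $f^{(j)}(0)=f(0)$ and $f^{(j)}(i'_{t-1})=f(i'_{t-1}+j)=f(i_t)$ by the definition of the shifted landscape. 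After relabelling $s=t-1$, the inner sum over all subchains, together with the empty subchain, which corresponds to $h=1$ and produces the leading $1$, is exactly $C_{i-j}(f^{(j)})$.

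Collecting these pieces gives
$$C_i(f)=1+\sum_{j=1}^{i-1}\binom{i}{j}\frac{f(j)}{f(0)-f(j)}\,C_{i-j}(f^{(j)})\,,$$
as claimed. I expect the only delicate point to be the bookkeeping of the indices in the change of variables: one must verify that the upper argument of each binomial telescopes correctly into $(i-j)-i'_{t-2}$, and that the empty subchain is accounted for exactly once, precisely as in the proof of Lemma~\ref{shift}. Once the index shift is set up, everything else is a routine matching of terms, and no convergence or analytic input is required since the identity is purely combinatorial.
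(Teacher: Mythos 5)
Your proof is correct and is essentially the paper's own argument read in the opposite direction: the paper expands each $C_{i-j}(f^{(j)})$ on the right-hand side and recombines via the substitution $j=i_1$, $j+l_t=i_{t+1}$, whereas you decompose $C_i(f)$ by the first index $i_1=j$ and shift with $i'_s=i_{s+1}-j$ — the same change of variables, the same identification $f^{(j)}(l)=f(j+l)$, and the same bookkeeping of the empty subchain. No gap.
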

\begin{proof}
Let $i\geq2$.
For $j\in\lbrace\,1,\dots,i-1\,\rbrace$,
$$C_{i-j}(f^{(j)})\,=\,
1+\sum_{h=1}^{i-j-1}\sum_{0=l_0<l_1<\cdots<l_h<i-j}
\prod_{t=1}^h\binom{i-j-l_{t-1}}{i-j-l_t}
\frac{f(j+l_t)}{f(0)-f(j+l_t)}\,.$$
We replace $C_{i-1}(f^{(1)}),\dots,C_1(f^{(i-1)})$
in the above formula,
and we change the indexes in the following way: 
$$h'\,=\,h+1\,,\qquad
j\,=\,i_1\,,\qquad
j+l_1\,=\,i_2\,,\qquad
\dots\,,\qquad
j+l_h\,=\,i_{h'}\,.$$
Exchanging the order of the sums 
gives the desired formula for $C_i(f)$.
\end{proof}
\begin{proof}[Proof of theorem~\ref{quasi}]
We show the result by induction on $N$.
Let us suppose first that $N=1$ and let $k\geq2$.
Then all the fitness functions
$f^{(j)}$, $j\geq1$, are
equal to the sharp peak landscape fitness function.
Applying lemma~\ref{shift} gives:
$$
y_k(f)\,=\,
\frac{a^k}{k!}\frac{f(0)}{f(0)-c}q_0
+a\frac{f(1)}{f(0)-f(1)}q_{k-1}
+\sum_{j=2}^{k-1}
\frac{a^j}{j!}\frac{c}{f(0)-c}q_{k-j}
\,.$$
Yet,
the sequence $(q_k)_{k\geq 0}$
satisfies the recurrence relation $(\cR)$
for the fitness function $(f(0),c,c,\dots)$,
i.e.,
$$\forall k\geq 1\qquad
q_k\,=\,\frac{1}{f(0)-c}\bigg(
\frac{a^k}{k!}f(0)+c\sum_{j=1}^{k-1}\frac{a^{j}}{j!}
q_{k-j}
\bigg)\,.$$
It follows that
$$y_k(f)\,=\,q_k+
a\bigg(\frac{f(1)}{f(0)-f(1)}-\frac{c}{f(0)-c}
\bigg)q_{k-1}\,.$$
The base case $N=1$ is thus settled.
Let now $N\geq2$ and
let us suppose that the result of theorem~\ref{quasi}
holds up to $N-1$.
Let $k>N$.
On one hand, for all $j\geq N$, 
the fitness function $f^{(j)}$ is equal to
the sharp peak landscape fitness function, 
therefore $y_{k-j}(f^{(j)})=q_{k-j}$
for all $j\in\lbrace\,1,\dots,N\,\rbrace$.
On the other hand, $f(N+1)=\cdots=f(k)=c$.
Thus, applying lemma~\ref{shift} gives
\begin{multline*}
y_k(f)\,=\,\frac{a^k}{k!}\frac{f(0)}{f(0)-f(k)}+
\sum_{j=1}^{k-1}\frac{a^j}{j!}
\frac{f(j)}{f(0)-f(j)}y_{k-j}(f^{(j)})\\=
\,
\frac{1}{f(0)-c}\bigg(
\frac{a^k}{k!}f(0)+c\sum_{j=1}^{k-1}\frac{a^j}{j!}q_{k-j}
\bigg)
-\frac{c}{f(0)-c}\sum_{j=1}^{N}\frac{a^j}{j!}q_{k-j}
\\+\sum_{j=1}^{N}\frac{a^j}{j!}
\frac{f(j)}{f(0)-f(j)}y_{k-j}(f^{(j)})\\
\,=q_k-\frac{c}{f(0)-c}\sum_{j=1}^{N}\frac{a^j}{j!}q_{k-j}
+\sum_{j=1}^{N}\frac{a^j}{j!}
\frac{f(j)}{f(0)-f(j)}y_{k-j}(f^{(j)})\,.
\end{multline*}

By the induction hypothesis, 
for all $j\in\lbrace\,1,\dots,N\,\rbrace$,
we have
\begin{align*}
y_{k-j}(f^{(j)})\,&=\,
q_{k-j}
+\sum_{l=1}^{N-j}\frac{a^l}{l!}q_{k-j-l}\bigg(
\frac{f^{(j)}(l)}{f^{(j)}(0)-f^{(j)}(l)}
-\frac{c}{f^{(j)}(0)-c}\bigg)C_l(f^{(j)})\\
&=\,q_{k-j}
+\sum_{l=1}^{N-j}\frac{a^l}{l!}q_{k-j-l}\bigg(
\frac{f(j+l)}{f(0)-f(j+l)}
-\frac{c}{f(0)-c}\bigg)C_l(f^{(j)})\,.
\end{align*}
We replace
$y_{k-N}(f^{(N)}),\dots,y_{k-1}(f^{(1)})$
in the formula for $y_k(f)$
and we obtain
\begin{multline*}
y_k(f)\,=\,q_k
-\frac{c}{f(0)-c}
\sum_{j=1}^N\frac{a^j}{j!}q_{k-j}
+\sum_{j=1}^N\frac{a^j}{j!}\frac{f(j)}{f(0)-f(j)}\\
\times\Bigg(
q_{k-j}+\sum_{l=1}^{N-j}\frac{a^l}{l!}q_{k-j-l}
\bigg(
\frac{f(j+l)}{f(0)-f(j+l)}-\frac{c}{f(0)-c}
\bigg)C_l\big(f^{(j)}\big)
\Bigg)\,.
\end{multline*}
Let us fix $i\in\lbrace\,1,\dots,N\,\rbrace$.
The coefficient of $q_{k-i}$
in the development of $y_k(f)$ is then equal to:
\begin{multline*}
\frac{a^i}{i!}\Bigg(\frac{f(i)}{f(0)-f(i)}
-\frac{c}{f(0)-c}\Bigg)
\\+\sum_{\genfrac{}{}{0pt}{1}
{1\leq j \leq N}{\genfrac{}{}{0pt}{1}{1\leq l\leq N-j}{j+l=i}}}
\frac{a^j}{j!}\frac{f(j)}{f(0)-f(j)}
\frac{a^l}{l!}\bigg(
\frac{f(j+l)}{f(0)-f(j+l)}-\frac{c}{f(0)-c}
\bigg)C_l(f^{(j)})\\
=\,\frac{a^i}{i!}\bigg(
\frac{f(i)}{f(0)-f(i)}-\frac{c}{f(0)-c}
\bigg)\Bigg(
1+\sum_{j=1}^{i-1}
\binom{i}{j}\frac{f(j)}{f(0)-f(j)}
C_{i-j}(f^{(j)})
\Bigg)\,.
\end{multline*}
We conclude thanks to lemma~\ref{coefs}.
\end{proof}

\section{Proof of the corollary}
We begin by giving two useful lemmas.
For a fitness function
${f:\N\lra[0,\infty[}$
we denote by $(y_k(f))_{k\geq0}$
the solution to the recurrence relation $(\cR)$
corresponding to the function $f$.
\begin{lemma}\label{compfitness}
Let
$f,g:\N\lra[0,\infty[$
be two fitness functions satisfying both
${f(0)=g(0)}$ and
$f(k)\geq g(k)$ for all $k\geq1$.
Then, for all $k\geq 0$,
we have $y_k(f)\geq y_k(g)$.
\end{lemma}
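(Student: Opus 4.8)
The plan is to argue by induction on $k$, exploiting the fact that the recurrence $(\cR)$ has the monotone shape ``strictly positive prefactor times a sum of nonnegative terms''. First I would record the standing positivity facts: since every fitness function is assumed to satisfy $f(0)>f(k)$ for $k\geq 1$, we have $f(0)-f(k)>0$ and $f(0)>0$, and likewise for $g$; consequently each $y_k(f)$ and $y_k(g)$ is well defined and strictly positive, because the $j=0$ term $y_0 f(0)\,a^k/k!$ alone is positive while all remaining terms are nonnegative.

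The base case $k=0$ is immediate, as $y_0(f)=y_0(g)=1$. For the inductive step, assume $y_j(f)\geq y_j(g)\geq 0$ for all $j<k$, and compare the two instances of $(\cR)$ factor by factor. Because $f(0)=g(0)$ and $f(k)\geq g(k)$, we get $0<f(0)-f(k)\leq g(0)-g(k)$, so the prefactors satisfy $1/(f(0)-f(k))\geq 1/(g(0)-g(k))>0$. For the summands, for each $j\in\lbrace\,0,\dots,k-1\,\rbrace$ I would establish $y_j(f)f(j)\geq y_j(g)g(j)\geq 0$: this follows from $y_j(f)\geq y_j(g)\geq 0$ and $f(j)\geq g(j)\geq 0$ (for $j=0$ this uses $f(0)=g(0)$ and $y_0(f)=y_0(g)$) by writing $y_j(f)f(j)\geq y_j(g)f(j)\geq y_j(g)g(j)$. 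Summing against the nonnegative weights $a^{k-j}/(k-j)!$ then yields $\sum_{j=0}^{k-1}y_j(f)f(j)\tfrac{a^{k-j}}{(k-j)!}\geq \sum_{j=0}^{k-1}y_j(g)g(j)\tfrac{a^{k-j}}{(k-j)!}\geq 0$.

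Finally I would combine the two inequalities. Writing $y_k(f)=P_fS_f$ and $y_k(g)=P_gS_g$ with $P_f\geq P_g>0$ and $S_f\geq S_g\geq 0$, the chain $P_fS_f\geq P_gS_f\geq P_gS_g$ gives $y_k(f)\geq y_k(g)$, which closes the induction.

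The only delicate point, and the nearest thing to an obstacle, is making sure the prefactor comparison and the summand comparison both point the same way while every quantity involved is nonnegative, so that multiplying the two inequalities is legitimate. Here the hypothesis $f(k)\geq g(k)$ is used precisely because it makes the denominator $f(0)-f(k)$ \emph{smaller}, hence the prefactor \emph{larger}, consistently with the termwise domination of the sum.
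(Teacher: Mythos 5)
Your proof is correct and is essentially the paper's own argument: the paper's (very terse) proof simply states that the result follows from the termwise inequality
$$\frac{1}{f(0)-f(k)}\sum_{j=0}^{k-1}y_j(f)f(j)\frac{a^{k-j}}{(k-j)!}\,\geq\,\frac{1}{g(0)-g(k)}\sum_{j=0}^{k-1}y_j(g)g(j)\frac{a^{k-j}}{(k-j)!}$$
together with induction, which is exactly what you carry out. Your version is just the fully detailed form, correctly tracking the nonnegativity needed to multiply the prefactor and sum inequalities.
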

\begin{proof}
The result follows
from the inequality
$$\frac{1}{f(0)-f(k)}
\sum_{j=0}^{k-1}y_jf(j)\frac{a^{k-j}}{(k-j)!}\,\geq\,
\frac{1}{g(0)-g(k)}
\sum_{j=0}^{k-1}y_jg(j)\frac{a^{k-j}}{(k-j)!}\,,$$
along with an induction argument.
\end{proof}
Let $N\geq 1$ and $\s>c\geq0$.
We define the fitness function ${g_N:\N\lra[0,\infty[}$
by setting:
$$\forall k\geq0\,,\qquad
g_N(k)\,=\,
\begin{cases}
\quad \s\quad\text{if}\ k=0\,,\\
\quad 0\quad\text{if}\ 1\leq k\leq N\,,\\
\quad 1\quad\text{if}\ N+1\leq k\,.
\end{cases}$$
\begin{lemma}\label{serreursimple}
The series associated to $(y_k(g_N))_{k\geq0}$
converges if and only if $\s\exa>c$.
\end{lemma}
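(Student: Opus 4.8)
The plan is to solve the recurrence $(\cR)$ for $g_N$ in closed form by means of an ordinary generating function, to read off the radius of convergence of that generating function, and to translate it into a convergence criterion for $\sum_k y_k(g_N)$. Two elementary observations come first. Since the $j=0$ term in $(\cR)$ equals $\s\frac{a^k}{k!}>0$, an immediate induction shows $y_k(g_N)>0$ for every $k$; hence $\sum_k y_k(g_N)$ is a series of positive terms, and (once we know the nearest singularity of $Y$ on its circle of convergence is a pole) it converges if and only if the radius of convergence $R$ of $Y(x)=\sum_{k\ge0}y_k(g_N)x^k$ satisfies $R>1$. Moreover, because $g_N(j)=0$ for $1\le j\le N$, the sum in $(\cR)$ collapses to its $j=0$ term whenever $1\le k\le N$, so that $y_k(g_N)=\frac{a^k}{k!}$ for $0\le k\le N$.

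Next I would rewrite $(\cR)$ in the symmetric form $\s\,y_k=\sum_{j=0}^{k}y_j\,g_N(j)\frac{a^{k-j}}{(k-j)!}$, valid for all $k\ge0$ (move the $f(k)y_k$ term across and absorb it as the $j=k$ summand). The right-hand side is a Cauchy product, so multiplying by $x^k$ and summing yields the functional equation $\s\,Y(x)=B(x)\,e^{ax}$, where $B(x)=\sum_k y_k\,g_N(k)\,x^k$. Using the values of $g_N$ together with $y_k=a^k/k!$ for $k\le N$ gives $B(x)=\s+c\big(Y(x)-P_N(x)\big)$, with $P_N(x)=\sum_{k=0}^N (ax)^k/k!$. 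Combining the two relations and solving for $Y$ produces the explicit formula
$$Y(x)\,=\,\frac{\s-c\,P_N(x)}{\s e^{-ax}-c}\,.$$
As the right-hand side is analytic at $0$ (its denominator equals $\s-c\neq0$ there), this identity holds between analytic functions, not merely between formal series.

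Then I would locate the nearest singularity of $Y$. The numerator is entire, and for $c>0$ the denominator $\s e^{-ax}-c$ vanishes exactly on $\frac1a\log(\s/c)+\frac{2\pi i}{a}\Z$, whose point of least modulus is the real number $x_0=\frac1a\log(\s/c)>0$. At $x_0$ the numerator equals $\s-c\sum_{k=0}^N\frac{(\log(\s/c))^k}{k!}>\s-c\,e^{\log(\s/c)}=0$, since a partial sum of the exponential with positive argument is strictly smaller than its total; hence $x_0$ is a genuine simple pole and $R=\frac1a\log(\s/c)$ (and $R=\infty$ when $c=0$). Consequently $\sum_k y_k(g_N)$ converges if and only if $R>1$, that is, if and only if $\log(\s/c)>a$, i.e. if and only if $\s\exa>c$; when $c=0$ both sides of this equivalence hold trivially.

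The main obstacle is the borderline case $R=1$, i.e. $\s\exa=c$: then $x=1$ is itself the pole, and convergence must be excluded by hand. This is precisely where the positivity of the $y_k(g_N)$ is essential: $Y$ is then increasing on $[0,1)$ and tends to $+\infty$ as $x\uparrow1$ because $1$ is a pole, so $\sum_k y_k(g_N)=\lim_{x\to1^-}Y(x)=+\infty$ and the series diverges. For the easier convergence half ($\s\exa>c$) one may alternatively bypass complex analysis by invoking Lemma~\ref{compfitness}: since $g_N(0)=\s$ and $g_N(k)\le c$ for every $k\ge1$, comparison with the sharp–peak landscape $(\s,c,c,\dots)$ yields $y_k(g_N)\le q_k$, and $\sum_k q_k=\frac{\s-c}{\s\exa-c}<\infty$ by the closed form for $(q_k)$. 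The divergence half, however, resists a direct comparison from below, because the vanishing of $g_N$ on $\{1,\dots,N\}$ blocks any minorising landscape with the same threshold; this is why the pole computation above appears unavoidable for that direction.
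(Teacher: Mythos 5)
Your argument is correct, and it takes a genuinely different route from the paper's. The paper splits the two directions: for $\s\exa>c$ it invokes the comparison Lemma~\ref{compfitness} against the sharp peak landscape $(\s,c,c,\dots)$, exactly as in your ``alternative'' remark; for $\s\exa\leq c$ it argues indirectly via Lemma~\ref{conv}, assuming a quasispecies $(x_k)_{k\geq0}$ exists, deducing $\Phi=\s\exa$ and $x_k=x_0a^k/k!$ for $1\leq k\leq N$, and then showing from the normalization that $x_0=\frac{\s\exa-c}{\s-ct_N(a)}$ with $t_N(a)=\sum_{k=0}^N a^k/k!$ is either nonpositive or forces $x_0+\cdots+x_N>1$ --- a contradiction in both cases. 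You instead solve the recurrence globally: the closed form $Y(x)=\bigl(\s-cP_N(x)\bigr)/\bigl(\s e^{-ax}-c\bigr)$ handles both directions at once by locating the dominant pole at $\frac1a\log(\s/c)$, and your treatment of the borderline case $\s\exa=c$ (pole at $x=1$ plus positivity of the coefficients) and of the potential cancellation of the numerator (strict inequality for partial sums of the exponential) closes the two points where such an argument could leak. What your approach buys is uniformity and an explicit value, $\sum_k y_k(g_N)=\frac{\s-ct_N(a)}{\s\exa-c}$ in the convergent case, which is precisely the reciprocal of the $x_0$ appearing in the paper's computation; what it costs is a small amount of complex analysis that the paper's purely algebraic contradiction avoids. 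One cosmetic point: the paper's displayed definition of $g_N$ assigns the value $1$ (rather than $c$) to the classes $k>N$, which is evidently a typo given the statement of the lemma and its later use in the proof of Corollary~\ref{serreur}; your reading with $c$ is the intended one.
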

\begin{proof}
We know the result to be true
for the sharp peak landscape, 
i.e. for $N=0$.
By the comparison lemma~\ref{compfitness},
if $\s\exa>c$ 
the series associated to $(y_k(g_N))_{k\geq0}$
converges.
Suppose next that $\s\exa\leq c$.
By lemma~\ref{conv},
the convergence of the series associated to 
$(y_k(g_N))_{k\geq 0}$
is equivalent to the existence of 
a quasispecies associated to $g_N$.
We will thus show that such a quasispecies
cannot exist if $\s\exa\leq1$.
Let us suppose that 
a quasispecies $(x_k)_{k\geq0}$ exists.
The sequence $(x_k)_{k\geq0}$ then verifies:
\begin{align*}
x_0\,&>\,0\,,\quad
\sum_{k\geq0}x_k\,=\,1\,,\\
\Phi\,&=\,\s x_0+c\sum_{k>N}x_k\,,\\
0\,&=\,x_0(\s\exa-\Phi)\,,\\
0\,&=\,x_0\s\frac{a^k}{k!}\exa-x_k\Phi\,,\quad
1\leq k\leq N\,.
\end{align*}
In particualr $\Phi=\s\exa$ and $x_k=x_0 a^k/k!$
for $1\leq k\leq N$. Thus,
$$\s\exa\,=\,\Phi\,=\,
\s x_0 +c\big(1-(x_0+\dots+x_N)\big)\,=\,
\bigg(
\s-c\sum_{k=0}^N\frac{a^k}{k!}
\bigg)x_0+c\,.$$ 
Let
$$t_N(a)\,=\,\sum_{k=0}^N\frac{a^k}{k!}\,.$$
We conclude that $x_0$ is given by
$$x_0\,=\,\frac{\s\exa-c}{\s-ct_N(a)}\,.$$
Denote by $a^*$ the only positive solution
to the equation $ct_N(a)=\s$.
The expression obtained for $x_0$
is not positive for
$a\in[\ln(\s/c),a^*[$, so a quasispecies
cannot exist for $a$ in this interval.
If on the contrary $a\geq a^*$, we have
$$x_0+\cdots+x_N\,=\,t_N(a)x_0\,=\,
\frac{\s\exa t_N(a)-ct_N(a)}{\s-ct_N(a)}\,.$$
However, $t_N(a)<e^a$, 
which implies that this last expression
is strictly larger than 1.
Thus,
a quasispecies cannot exist
if $a\geq a^*$ either.
\end{proof}

We proceed now to the proof of the corollary.
After lemma~\ref{conv},
corollary~\ref{serreur}
will be settled if we manage to show
that the series associated to $(y_k)_{k\geq0}$
converges if $f(0)\exa>\limsup_{n\to\infty}f(n)$,
and diverges if $f(0)\exa<\liminf_{n\to\infty}f(n)$.
Let us start by showing the former.
Suppose first that the function 
is constant equal to $c\geq0$ from $N$ onwards.
We can thus apply theorem~\ref{quasi},
and obtain
$$\forall k>N\,,\qquad
y_k\,=\,q_k
+\sum_{j=1}^N\frac{a^j}{j!}q_{k-j}
\bigg(
\frac{f(j)}{f(0)-f(j)}-\frac{c}{f(0)-c}
\bigg)C_j(f)\,.$$
It follows that
$$\sum_{k\geq0}y_k=
\sum_{k=0}^N y_k
+\sum_{k>N}q_k
+\sum_{j=1}^N\frac{a^j}{j!}\bigg(\sum_{k>N}q_{k-j}\bigg)
\bigg(
\frac{f(j)}{f(0)-f(j)}-\frac{c}{f(0)-c}
\bigg)C_j(f)\,.
$$
Yet, the series associated to $(q_k)_{k\geq0}$
is convergent for $f(0)\exa>c$.
If the function $f:\N\lra[0,\infty[$
is not eventually constant, we set
$$c^\infty\,=\,\limsup_{n\to\infty}f(n)\,.$$
Let $\e>0$, pick $N\geq0$ large enough so that for all
$k>N$, $f(k)<c^\infty+\e$.
We define the function $f^N:\N\lra[0,\infty[$ by:
$$
\forall k\geq 0\qquad
f^N(k)\,=\,\begin{cases}
\quad f(k)\quad\text{if}\ 0\leq k\leq N\,,\\
\quad c^\infty+\e\quad\text{if}\ k>N\,.
\end{cases}
$$
For $\e$ small enough,
$f^N(0)\exa>c^\infty+\e$.
Since $f^N$ is constant equal to $c^\infty+\e$ 
from $N$ onwards,
the series associated to $(y_k(f^N))_{k\geq0}$ converges.
By the comparison lemma~\ref{compfitness}, 
the same is true for the series associated to
$(y_k(f))_{k\geq0}$.
We prove next that if
${f(0)\exa<\liminf_{n\to\infty}f(n)}$,
then the series associated to $(y_k)_{k\geq0}$ diverges.
We define the function $f_N:\N\lra[0,\infty[$ by:
$$
\forall k\geq 0\qquad
f_N(k)\,=\,\begin{cases}
\quad f(0)\quad\text{if}\ k=0\,,\\
\quad 0\quad\text{if}\ 1\leq k\leq N\,,\\
\quad c_\infty-\e\quad\text{if}\ k>N\,.
\end{cases}
$$
For $\e$ small enough,
$f^N(0)\exa<c^\infty-\e$.
After lemma~\ref{serreursimple}
the series associated to $(y_k(f^N))_{k\geq0}$ 
is divergent.
By the comparison lemma~\ref{compfitness}, 
the same is true for the series associated to
$(y_k(f))_{k\geq0}$.

\section{Conclusions}
We have given several explicit formulas for the
stationary solutions of Eigen's quasispecies model
in the regime where the length of the genotypes 
goes to infinity.
Theorem~\ref{fitness} allows the inference 
of the fitness landscape from data about the 
concentrations of the different genotypes,
which makes it particularly attractive for applications.
The formulas in theorems~\ref{multinom} and~\ref{updown}
give the concentrations of the different Hamming classes,
relative to the master sequence.
For fitness landscapes which are eventually constant,
a link is made to the already known distribution of
the quasispecies (for the sharp peak landscape~\cite{CD})
in theorem~\ref{quasi}.
Finally corollary~\ref{serreur} generalises 
the error threshold criterion observed for the sharp peak landscape
to fitness landscapes depending on the Hamming class.
The main interest of our results lies in their exact
nature; the only approximation they rely on 
is the long chain regime,
which even the simplest genomes in nature fall into.
The main limitation of our work is the assumption that
the fitness of an individual depends on its genome
only through the number of point mutations 
from the master sequence.
Nevertheless, we believe that our results
provide a first step in finding quasispecies distributions
for even more general fitness landscapes.

\bibliography{gen}

\begin{thebibliography}{10}

\bibitem{BK83}
Chikafusa Bessho and Naoki Kuroda.
\newblock A note on a more general solution of {E}igen's rate equation for
  selection.
\newblock {\em Bull. Math. Biol.}, 45(1):143--149, 1983.

\bibitem{BNS1}
Alexander~S. Bratus, Artem~S. Novozhilov, and Yuri~S. Semenov.
\newblock Linear algebra of the permutation invariant {C}row-{K}imura model of
  prebiotic evolution.
\newblock {\em Math. Biosci.}, 256:42--57, 2014.

\bibitem{Car}
L.~Carlitz.
\newblock Permutations with prescribed pattern.
\newblock {\em Math. Nachr.}, 58:31--53, 1973.

\bibitem{CerfM}
Rapha\"el Cerf.
\newblock {\em Critical population and error threshold on the sharp peak
  landscape for a Moran model}.
\newblock Memoirs of the American Mathematical Society. 2015.

\bibitem{CD}
Raphaël Cerf and Joseba Dalmau.
\newblock The distribution of the quasispecies for a moran model on the sharp
  peak landscape.
\newblock {\em Stochastic Processes and their Applications}, 126(6):1681 --
  1709, 2016.

\bibitem{Dalmau}
Joseba Dalmau.
\newblock Convergence of a moran model to eigen's quasispecies model.
\newblock {\em arXiv preprint}, 2014.

\bibitem{Eig}
Manfred Eigen.
\newblock Self-organization of matter and the evolution of biological
  macromolecules.
\newblock {\em Naturwissenschaften}, 58(10):465--523, 1971.

\bibitem{EMS}
Manfred Eigen, John McCaskill, and Peter Schuster.
\newblock The molecular quasi-species.
\newblock {\em Advances in Chemical Physics}, 75:149--263, 1989.

\bibitem{Jones77}
B.~L. Jones.
\newblock Analysis of {E}igen's equations for selection of biological molecules
  with fluctuating mutation rates.
\newblock {\em Bull. Math. Biology}, 39(3):311--316, 1977.

\bibitem{JER}
B.L. Jones, R.H. Enns, and S.S. Rangnekar.
\newblock {On the theory of selection of coupled macromolecular systems}.
\newblock {\em Bulletin of Mathematical Biology}, 38(1):15--28, 1976.

\bibitem{Kingman}
J.~F.~C. Kingman.
\newblock On the properties of bilinear models for the balance between genetic
  mutation and selection.
\newblock {\em Math. Proc. Cambridge Philos. Soc.}, 81(3):443--453, 1977.

\bibitem{Moran76}
P.~A.~P. Moran.
\newblock Global stability of genetic systems governed by mutation and
  selection.
\newblock {\em Math. Proc. Cambridge Philos. Soc.}, 80(2):331--336, 1976.

\bibitem{Moran77}
P.~A.~P. Moran.
\newblock Global stability of genetic systems governed by mutation and
  selection. {II}.
\newblock {\em Math. Proc. Cambridge Philos. Soc.}, 81(3):435--441, 1977.

\bibitem{Niven}
Ivan Niven.
\newblock A combinatorial problem of finite sequences.
\newblock {\em Nieuw Arch. Wisk. (3)}, 16:116--123, 1968.

\bibitem{NS}
Martin Nowak and Peter Schuster.
\newblock Error thresholds of replication in finite populations mutation
  frequencies and the onset of muller's ratchet.
\newblock {\em Journal of Theoretical Biology}, 137(4):375 -- 395, 1989.

\bibitem{Saa}
David~B. Saakian.
\newblock A new method for the solution of models of biological evolution:
  derivation of exact steady-state distributions.
\newblock {\em J. Stat. Phys.}, 128(3):781--798, 2007.

\bibitem{SBH}
David~B. Saakian, Christof~K. Biebricher, and Chin-Kun Hu.
\newblock Lethal mutants and truncated selection together solve a paradox of
  the origin of life.
\newblock {\em PLoS ONE}, 6(7):1--12, 07 2011.

\bibitem{SH}
David~B. Saakian and Chin-Kun Hu.
\newblock Exact solution of the eigen model with general fitness functions and
  degradation rates.
\newblock {\em Proceedings of the National Academy of Sciences of the United
  States of America}, 103(13):4935--4939, 2006.

\bibitem{SGMGB}
David Seifert, Francesca~Di Giallonardo, Karin~J. Metzner, Huldrych~F.
  Günthard, , and Niko Beerenwinkel.
\newblock A framework for inferring fitness landscapes of patient-derived
  viruses using quasispecies theory.
\newblock {\em Genetics}, 199(1):192--203, 2015.

\bibitem{BNS2}
Yuri~S. Semenov, Alexander~S. Bratus, and Artem~S. Novozhilov.
\newblock On the behavior of the leading eigenvalue of {E}igen's evolutionary
  matrices.
\newblock {\em Math. Biosci.}, 258:134--147, 2014.

\bibitem{NS1}
Yuri~S. Semenov and Artem~S. Novozhilov.
\newblock Exact solutions for the selection-mutation equilibrium in the
  crow-kimura evolutionary model.
\newblock {\em ArXiv preprint}, 2015.

\bibitem{NS2}
Yuri~S. Semenov and Artem~S. Novozhilov.
\newblock On eigen's quasispecies model, two-valued fitness landscapes, and
  isometry groups acting on finite metric spaces.
\newblock {\em ArXiv preprint}, 2015.

\bibitem{Shev}
Vladimir Shevelev.
\newblock Number of permutations with prescribed up-down structure as a
  function of two variables.
\newblock {\em Integers}, 12(4):529--569, 2012.

\bibitem{SS}
J.~Swetina and P.~Schuster.
\newblock Self--replication with errors. a model for polynucleotide
  replication.
\newblock {\em Biophys. Chem.}, 16(4):329--45, 1982.

\bibitem{TM}
Colin~J. Thompson and John~L. McBride.
\newblock On {E}igen's theory of the self-organization of matter and the
  evolution of biological macromolecules.
\newblock {\em Math. Biosci.}, 21:127--142, 1974.

\end{thebibliography}
\bibliographystyle{plain}
\end{document}